\title{Adaptive Influence Maximization with \\ Myopic Feedback}
\author{%
  Binghui Peng \\
  Tsinghua University\\
  \texttt{pbh15@mails.tsinghua.edu.cn}\\
  \And
  Wei Chen\\
  Microsoft Research\\
  \texttt{weic@microsoft.com}\\
}
\newtheorem{theorem}{Theorem}
\newtheorem{lemma}{Lemma}
\newtheorem{claim}{Claim}
\newtheorem{definition}{Definition}
\newcommand{\E}{\mathop{\mathbb{E}}}
\newcommand{\I}{\mathbb{I}}
\newcommand{\R}{\mathbb{R}}
\newcommand{\cA}{\mathcal{A}}
\newcommand{\cP}{\mathcal{P}}
\newcommand{\cR}{\mathcal{R}}
\newcommand{\cW}{\mathcal{W}}
\newcommand{\argmax}{\mathrm{argmax}}
\newcommand{\dom}{{\rm dom}}
\newcommand{\G}{{\rm Greedy}}
\newcommand{\fa}{\sigma}
\newcommand{\bfa}{\bar{\sigma}}
\newcommand{\OPT}{\mathrm{OPT}}
\newcommand{\Evt}{\mathcal{E}}
\newcommand{\T}{\mathcal{T}}
\newcommand{\aL}{\mathcal{L}}
\newcommand{\Sps}{S^{+}}
\newcommand{\bp}{\bar{\pi}}
\begin{document}

\maketitle

\begin{abstract}
We study the {\em adaptive influence maximization problem with myopic feedback} under the
	independent cascade model: one sequentially selects $k$ nodes as seeds one by one from a social network,
	and each selected seed returns the immediate neighbors it activates as the feedback available
	for later selections, and the goal is to maximize the expected number of total activated nodes,
	referred as the {\em influence spread}.
We show that the {\em adaptivity gap}, the ratio between the optimal adaptive influence spread and
	the optimal non-adaptive influence spread, is at most $4$ and at least $e/(e-1)$, 
	and the approximation ratios with respect to the optimal adaptive influence spread
	of both the non-adaptive greedy and adaptive greedy algorithms are at least 
	$\frac{1}{4}(1 - \frac{1}{e})$ and at most $\frac{e^2 + 1}{(e + 1)^2} < 1 - \frac{1}{e}$.
Moreover, the approximation ratio of the non-adaptive greedy algorithm is no worse than
	that of the adaptive greedy algorithm, when considering all graphs.
Our result confirms a long-standing open conjecture of 
	Golovin and Krause (2011) on the constant 
	approximation ratio of adaptive greedy with myopic feedback, and
	it also suggests that adaptive greedy may not bring much benefit under myopic feedback.

\end{abstract}
\section{Introduction}

Influence maximization is the task of given a social network and a stochastic 
diffusion model on the network, 
finding the $k$ seed nodes with the largest expected influence spread in the
model~\cite{Kempe2003maximizing}.
Influence maximization and its variants have applications in viral marketing, rumor control, etc. and
have been extensively studied (cf. \cite{chen2013information,LiFWT18}).

In this paper, we focus on the {\em adaptive influence maximization} problem, where seed nodes are
sequentially selected one by one, and after each seed selection, partial or full
	 diffusion results from the seed are returned
as the feedback, which could be used for subsequent seed selections.
Two main types of feedback has been proposed and studied before:
	(a) {\em full-adoption feedback}, where the entire diffusion process from the seed selected is 
	returned as the feedback, and
	(b) {\em myopic feedback}, where only the immediate neighbors activated by
	the selected seed are returned as the feedback.
Under the common independent cascade (IC) model where every edge in the graph has an 
	independent probability of passing influence, \citet{GoloKrause11} show that 
	the full-adoption feedback model satisfies the key {\em adaptive submodularity} property, 
	which enables a simple adaptive greedy algorithm to achieve a $(1-1/e)$ approximation 
	to the adaptive optimal solution.
However, the IC model with myopic feedback is not adaptive submodular, and 
\citet{GoloKrause11} only conjecture that in this case the adaptive greedy algorithm still
guarantees a constant approximation.
To the best of our knowledge, this conjecture is still open before our result in this paper,
which confirms that indeed adaptive greedy is a constant approximation of the adaptive optimal
solution.

In particular, our paper presents two sets of related results on adaptive influence maximization
with myopic feedback under the IC model.
We first study the {\em adaptivity gap} of the problem (Section~\ref{sec:adaptivity-gap-myopic}),
which is defined as the ratio between the
adaptive optimal solution and the non-adaptive optimal solution, and is an indicator on how
useful the adaptivity could be to the problem.
We show that the adaptivity gap for our problem is at most $4$ 
(Theorem~\ref{theorem:adaptive-gap-upper}) and at least $e/(e-1)$
(Theorem~\ref{theorem:adaptive-gap-lower}).
The proof of the upper bound $4$ is the most involved, because the problem is not adaptive submodular,
and we have to create a hybrid policy that involves three independent runs of the diffusion
process in order to connect between an adaptive policy and a non-adaptive policy.
Next we study the approximation ratio with respect to the adaptive optimal solution
for both the non-adaptive greedy and adaptive greedy algorithms (Section~\ref{sec:constant-competitive-algo}).
We show that the approximation ratios of both algorithms are at least $\frac{1}{4}(1 - \frac{1}{e})$
(Theorem~\ref{theorem:greedy&adaptive-greedy-constant}), which
combines the adaptivity upper bound of $4$ with the results that both algorithms achieve
$(1-1/e)$ approximation of the non-adaptive optimal solution (the $(1-1/e)$ approximation
ratio for the adaptive greedy algorithm requires a new proof).
We further show that the approximation ratios for both algorithms are at most 
$\frac{e^2 + 1}{(e + 1)^2}\approx 0.606$, which is strictly less than $1-1/e \approx 0.632$, and
the approximation ratio of adaptive greedy is at most that of the non-adaptive greedy
when considering all graphs (Theorem~\ref{theorem:bad-example-greedy}).

In summary, our contribution is the systematic study on adaptive influence maximization with
myopic feedback under the IC model.
We prove both constant upper and lower bounds on the adaptivity gap in this case, and 
constant upper and lower bounds on the approximation ratios 
(with respect to the optimal adaptive solution) achieved by 
non-adaptive greedy and adaptive greedy algorithms.
The constant approximation ratio of the adaptive greedy algorithm answers a long-standing open
conjecture affirmatively.
Our result on the adaptivity gap is the first one on a problem not satisfying adaptive submodularity.
Our results also suggest that adaptive greedy may not bring much benefit under the myopic feedback
model.

Due to the space constraint, full proof details are included in the supplementary material.

\vspace{2mm}
\noindent{\bf Related Work.\ \ }
Influence maximization as a discrete optimization task is first proposed by \citet{Kempe2003maximizing},
	who propose the independent cascade, linear threshold and other models, study their submodularity
	and the greedy approximation algorithm for the influence maximization task.
Since then, influence maximization and its variants have been extensively studied. 
We refer to recent surveys~\cite{chen2013information,LiFWT18} for the general coverage of this
	area.

Adaptive submodularity is formulated by \citet{GoloKrause11} for general stochastic adaptive
	optimization problems, and they show that the adaptive greedy algorithm achieves
	$1-1/e$ approximation if the problem is adaptive monotone and adaptive submodular.
They study the influence maximization problem under the
	IC model as an application, and prove that the full-adoption feedback under the IC model is adaptive
	submodular.
However, in their arXiv version, they show that the myopic feedback version is not adaptive
	submodular, and they conjecture that adaptive greedy would still achieve a constant 
	approximation in this case.
 

Adaptive influence maximization has been studied 
	in~\cite{tong2017adaptive,yuan2017adaptive,sun2018multi,salha2018adaptive,tong2019adaptive, pmlr-v97-fujii19a}.
\citet{tong2017adaptive} provide both adaptive greedy and efficient heuristic algorithms for
	adaptive influence maximization.
Their theoretical analysis works for the full-adoption feedback model but has a gap when applied
	to myopic feedback, which is confirmed by the authors.
\citet{yuan2017adaptive} introduce the partial feedback model and develop algorithms that balance
	the tradeoff between delay and performance, and their partial feedback model does not 
	coincide with the myopic feedback model.
\citet{salha2018adaptive} consider a different diffusion model where edges can be reactivated
	at each time step, and they show that myopic feedback under this model 
	is adaptive submodular.
\citet{sun2018multi} study the multi-round adaptive influence maximization problem, where 
	$k$ seeds are selected in each round and at the end of the round the full-adoption feedback is 
	returned.
\citet{tong2019adaptive} introduces a general feedback model and develops some heuristic algorithms
	for this model.
\citet{pmlr-v97-fujii19a} study the adaptive influence maximization problem on \emph{bipartite} graphs. 
They introduce the notion of adaptive submodularity ratios and prove constant approximation ratio for the adaptive greedy algorithm. 
However, as noted by the authors, their results are not applicable to the general graph case and the adaptive submodularity ratio can be arbitrarily small in general graphs.
	
A different two stage seeding process has also been
	studied~\cite{seeman2013adaptive,badanidiyuru2016locally,singer2016influence}, but the model
	is quite different, since their first stage of selecting a node set $X$ is only to
	introduce the neighbors of $X$ as seeding candidates for the second stage.


Adaptivity gap has been studied by two lines of research.
The first line of work utilizes multilinear extension and adaptive submodularity to study
	adaptivity gaps for the class of stochastic submodular maximization problems and give a $e/(e - 1)$ upper bound for matroid constraints \cite{asadpour2008stochastic, asadpour2015maximizing}.
The second line of work \cite{gupta2016algorithms, gupta2017adaptivity, bradac2019near} studies the stochastic probing problem and proposes the idea of random-walk non-adaptive policy on the
	decision tree, which partially inspires our analysis.
However, their analysis also implicitly depends on adaptive submodularity.
In contrast, our result on the adaptivity gap is the first on a problem that does not satisfy
	adaptive submodularity (see Section~\ref{sec:adaptivityUpperBound} for more discussions).

\section{Model and Problem Definition}
\label{sec:model}

\noindent{\bf Diffusion Model.\ \ }
In this paper, we focus on the well known Independent Cascade (IC) model as the diffusion model. 
In the IC model, the social network is described by a directed influence graph 
	$G=(V, E, p)$, where $V$ is the set of nodes ($|V| = n$), 
	$E \subseteq V\times V$ is the set of directed edges, and
	each directed edge $(u, v) \in E$ is associated with a probability $p_{uv}\in [0, 1]$. 
The \emph{live edge} graph $L = (V, L(E))$ is a random subgraph of $G$, for any edge $(u, v) \in E$, $(u, v) \in L(E)$ with independent probability $p_{uv}$. 
If $(u, v)\in L(E)$, we say edge $(u, v)$ is \emph{live}, otherwise we say it is \emph{blocked}. 
The dynamic diffusion in the IC model is as follows:
	at time $t=0$ a live-edge graph $L$ is sampled and nodes in a seed set $S\subseteq V$ 
	 are activated.
At every discrete time $t = 1, 2, \ldots$, if a node $u$ was activated at time $t-1$, then all
	of $u$'s out-going neighbors in $L$ are activated at time $t$.
The propagation continues until there are no more activated nodes at a time step.
The dynamic model can be viewed equivalently as every activated node $u$ has one chance to activate
	each of its out-going neighbor $v$ with independent success probability $p_{uv}$.
Given a seed set $S$, the {\em influence spread} of $S$, denoted $\sigma(S)$, is the expected number of
	nodes activated in the diffusion process from $S$, i.e. $\sigma(S) = \E_{L}[|\Gamma(S,L)|]$,
	where $\Gamma(S,L)$ is the set of nodes reachable from $S$ in graph $L$.


\noindent{\bf Influence Maximization Problem.\ \ }
Under the IC model, 
	we formalize the influence maximization (IM) problem in both
	non-adaptive and adaptive settings.
Influence maximization in the non-adaptive setting follows the 
	classical work of~\cite{Kempe2003maximizing}, and is defined below.
	
\begin{definition}[Non-adaptive Influence Maximization]
{\em Non-adaptive influence maximization} is the problem of given a
	directed influence graph $G=(V,E, p)$ with IC model parameters $\{p_{uv}\}_{(u,v)\in E}$ and a
	budget $k$, finding a seed set $S^*$ of at most $k$ nodes such that the influence spread
	of $S^*$, $\sigma(S^*)$, is maximized, i.e. finding 
	$S^* \in \argmax_{S\subseteq V, |S|\le k} \sigma(S)$.
\end{definition}

We formulate influence maximization in the adaptive setting 
	following the framework of \cite{GoloKrause11}. 
Let $O$ denote the set of states, which informally 
	correspond to the feedback information in the adaptive setting.
A {\em realization} $\phi$ is a function $\phi:V \rightarrow {O}$, 
	such that for $u\in V$, $\phi(u)$ represents the feedback obtained when selecting $u$
	as a seed node.
In this paper, we focus on the {\em myopic feedback} model~\cite{GoloKrause11}, which means the feedback of 
	a node $u$ only contains the status of the out-going edges of $u$
	being live or blocked.
Informally it means that after selecting a seed we can only see its one step propagation effect as the feedback.
The realization $\phi$ then determines the status of every edge in $G$, and thus  
	corresponds to a live-edge graph.
As a comparison, the {\em full-adoption feedback} model~\cite{GoloKrause11}
	is such that for each seed node $u$, the feedback contains the status of every out-going
	edge of every node $v$ that is reachable from $u$ in a live-edge graph $L$.
This means that after selecting a seed $u$, we can see the full cascade from $u$ as the feedback.
In the full-adoption feedback case, each realization $\phi$ also corresponds to a unique 
	live-edge graph.
Henceforth, we refer to $\phi$ as both a realization and a live-edge graph interchangeably.
In the remainder of this section, 
	the terminologies we introduce apply to both feedback models, unless we
	explicitly point out which feedback model we are discussing.

Let $\cR$ denote the set of all realizations.
We use $\Phi$ to denote a random realization, following the distribution $\cP$ over random live-edge
	graphs (i.e. each edge $(u,v)\in E$ has an independent probability of $p_{uv}$ to be live
	in $\Phi$).
Given a subset $S$ and a realization $\phi$, we define {\em influence utility function}
	$f: 2^V  \times \cR \rightarrow \R^{+}$ as $f(S,\phi) = |\Gamma(S, \phi)|$, where
	$\R^{+}$ is the set of non-negative real numbers.
That is, $f(S,\phi)$ is the number of nodes reachable from $S$ in realization
	(live-edge graph) $\phi$.
Then it is clear that influence spread $\sigma(S) = \E_{\Phi\sim \cP}[f(S,\Phi)]$.

%

In the \emph{adaptive influence maximization} problem, we could sequentially select nodes as seeds,
	and after selecting one seed node, we could obtain its feedback, and use the feedback to 
	guide further seed selections.
A \emph{partial realization} $\psi$ maps a subset of nodes in $V$, denoted $\dom(\psi)$ for
	domain of $\psi$,
	to their states.
 Partial realization $\psi$ represents the feedback we could obtain after nodes in
 	$\dom(\psi)$ are selected as seeds.
For convenience, we also represent $\psi$ as a relation, i.e., 
 	$\psi = \{(u, o) \in V \times {O}: u \in \dom(\psi), o = \psi(u)\} $. 
We say that a full realization $\phi$ is \emph{consistent} with a partial realization $\psi$, denoted as $\phi \sim \psi$, if $\phi(u) = \psi(u)$ for every $u\in \dom(\psi)$.

An adaptive policy $\pi$ is a mapping from partial realizations to nodes.
Given a partial realization $\psi$, $\pi(\psi)$ represents the next seed node policy $\pi$
	would select when it sees the feedback represented by $\psi$.
Under a full realization $\phi$ consistent with $\psi$, 
	after selecting $\pi(\psi)$, the policy would obtain feedback $\phi(\pi(\psi))$, and
	the partial realization would grow to $\psi' = \psi \cup \{(\pi(\psi), \phi(\pi(\psi)))\}$,
	and policy $\pi$ could pick the next seed node $\pi(\psi')$ based on
	partial realization $\psi'$.
For convenience, we only consider deterministic policies in this paper, and
	the results we derived can be easily extend to randomized policies.
Let $V(\pi, \phi)$ denote the set of nodes selected by policy $\pi$ under realization $\phi$.
For the adaptive influence maximization problem, we consider the simple cardinality constraint
	such that $|V(\pi, \phi)| \le k$, i.e. the policy only selects at most $k$ nodes.
Let $\Pi(k)$ denote the set of such policies.

The objective of an adaptive policy $\pi$ is its {\em adaptive influence spread}, which is the
	expected number of nodes that are activated under policy $\pi$.
Formally, we define the adaptive influence spread of $\pi$ as
	$\sigma(\pi) = \E_{\Phi\sim \cP}[f(V(\pi, \Phi), \Phi)]$.
The adaptive influence maximization problem is defined as follows.

\begin{definition}[Adaptive Influence Maximization]
{\em Adaptive influence maximization} is the problem of given a
directed influence graph $G=(V,E, p)$ with IC model parameters $\{p_{uv}\}_{(u,v)\in E}$ and a
budget $k$, finding an adaptive policy $\pi^*$ that selects at most $k$ seed nodes
such that the adaptive influence spread of $\pi^*$, $\sigma(\pi^*)$, is maximized, i.e. finding 
$\pi^* \in \argmax_{\pi \in \Pi(k)} \sigma(\pi)$.
\end{definition}

Note that for any fixed seed set $S$, we can create a policy $\pi_S$ that
	always selects set $S$ regardless of the feedback, which means any non-adaptive solution
	is a feasible solution for adaptive influence maximization.
Therefore, the optimal adaptive influence spread should be at least as good as the optimal
	non-adaptive influence spread, under the same budget constraint.

\vspace{2mm}
\noindent{\bf Adaptivity Gap.\ \ }
Since the adaptive policy is usually hard to design and analyze and the adaptive interaction
	process	may also be slow in practice, a fundamental question for adaptive stochastic optimization problems is whether adaptive algorithms are really superior to non-adaptive algorithms. 
The {\em adaptivity gap} measures the gap between the optimal adaptive solution and the optimal non-adaptive solution. 
More concretely, if we use $\OPT_{N}(G, k)$ (resp. $\OPT_{A}(G, k)$) to denote the influence spread of the optimal non-adaptive (resp. adaptive) solution for the IM problem in an influence graph $G$ under the IC model with seed budget $k$, then we have the following definition.
\begin{definition}[Adaptivity Gap for IM]
	\label{definition:adaptivity-gap}
	The adaptivity gap in the IC model is defined as the supremum ratio of
	the influence spread between the optimal adaptive policy and the optimal non-adaptive policy, over all possible influence graphs and seed set size $k$, i.e., 
	\begin{align}\label{eq:adaptivity-gap-definition}
	\sup_{G, k}\frac{\OPT_{A}(G, k)}{\OPT_{N}(G, k)}.
	\end{align}
\end{definition}

\vspace{2mm}
\noindent{\bf Submodularity and Adaptive Submodularity.\ \ }
Non-adaptive influence maximization is often solved via submodular function maximization technique.
A set function $f:2^V \rightarrow \R$ is {\em submodular} if for all $S\subseteq T \subseteq V$
	and all $u \in V\setminus T$, 
	$f(S\cup \{u\}) - f(S) \ge f(T\cup \{u\}) - f(T)$.
Set function $f$ is monotone if for all $S\subseteq T \subseteq V$, $f(S) \le f(T)$.
\citet{Kempe2003maximizing} show that the influence spread function $\sigma(S)$ under
	the IC model is monotone and submodular,
	and thus a simple (non-adaptive) greedy algorithm achieves a 
	$(1-\frac{1}{e})$ approximation of the optimal (non-adaptive) solution, assuming function evaluation
	$\sigma(S)$ is given by an oracle.

\citet{GoloKrause11} define {\em adaptive submodularity} for the adaptive stochastic 
	optimization framework.
In the context of adaptive influence maximization, adaptive submodularity can be defined
	as follows.
Given a utility function $f$, for any 
	partial realization $\psi$ and a node $u\not\in \dom(\psi)$, 
	we define the marginal gain of $u$ given $\psi$ as
	$\Delta_{f}(u \mid \psi) = \E_{\Phi\sim \cP}[f(\dom(\psi) \cup \{u\}, \Phi) - f(\dom(\psi), \Phi) | \Phi \sim \psi ]$, i.e. the expected marginal gain on
	influence spread when adding $u$ to the partial realization $\psi$.
A partial realization $\psi$ is a {\em sub-realization} of another partial
	realization $\psi'$ if $\psi \subseteq \psi'$ when treating both as relations.
We say that the utility function $f$ is {\em adaptive submodular} with respect to 
	$\cP$ if for any two fixed partial realizations $\psi$ and $\psi'$ such that
	$\psi\subseteq \psi'$, for any $u \not\in \dom(\psi')$, we have
	$\Delta_{f}(u \mid \psi) \ge \Delta_{f}(u \mid \psi')$, that is,
	the marginal influence spread of a node given more feedback is at most 
	its marginal influence spread given less feedback.
We say that $f$ is {\em adaptive monotone} with respect to $\cP$ if
	for any partial realization $\psi$ with $\Pr_{\Phi \sim \cP}(\Phi \sim \psi) > 0$,
	$\Delta_{f}(u \mid \psi) \ge 0$.
	
\citet{GoloKrause11} show that the influence utility function under the
	IC model with full adoption feedback is adaptive monotone and adaptive submodular, and
	thus the adaptive greedy algorithm achieves $(1-\frac{1}{e})$ approximation of the adaptive
	optimal solution.
However, they show that the influence utility function under the IC model with myopic feedback is
	not adaptive submodular.
They conjecture that the adaptive greedy policy still provides a constant approximation.
In this paper, we show that the adaptive greedy policy provides a $\frac{1}{4}(1-\frac{1}{e})$ approximation,
	and thus finally address this conjecture affirmatively.

\section{Adaptivity Gap in Myopic Feedback Model}
\label{sec:adaptivity-gap-myopic}

In this section, we analyze the adaptivity gap for influence maximization problems under the myopic feedback model and derive both upper and lower bounds. 


\subsection{Upper Bound on the Adaptivity Gap}
\label{sec:adaptivityUpperBound}

Our main result is an upper bound on the adaptivity gap for myopic feedback models, which is formally stated below.

\begin{theorem}
	\label{theorem:adaptive-gap-upper}
	Under the IC model with myopic feedback, the adaptivity gap for the influence
	maximization problem is at most 4.
\end{theorem}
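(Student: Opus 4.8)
The plan is to fix any influence graph $G$ and budget $k$, let $\pi^*$ be an optimal adaptive policy, and construct from it a non-adaptive seed set (or distribution over seed sets) whose expected influence spread is at least $\sigma(\pi^*)/4$. The natural starting point is a \emph{random-walk non-adaptive policy} on the decision tree of $\pi^*$ in the spirit of Gupta et al.\ \cite{gupta2017adaptivity, bradac2019near}: sample a fresh realization $\Phi'$, walk down the decision tree of $\pi^*$ according to $\Phi'$, and output the node set $V(\pi^*,\Phi')$ (or sample a single root-to-leaf path and take the corresponding seeds). However, since myopic feedback under the IC model is \emph{not} adaptive submodular, the usual one-line comparison between the marginal gain along the walk and the adaptive value breaks down, so the main work is to replace that argument with something that does not use adaptive submodularity.

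The key idea — as hinted in the introduction — is to build a \emph{hybrid policy} coupling \emph{three} independent runs of the diffusion process. Concretely, I would split the activation of any seed $u$ into two stages: the first "layer" (the out-edges of $u$ directly, which is exactly what myopic feedback reveals) and the "rest" of the cascade (everything reachable beyond that first layer). The three independent copies would be: (i) the realization $\Phi$ used to define the adaptive policy's behavior and its first-layer feedback; (ii) an independent realization $\Phi'$ used by the non-adaptive random walk to choose which branch of the tree to follow; and (iii) an independent realization $\Phi''$ governing the downstream spread from the activated first-layer neighbors. Because myopic feedback only exposes the first layer, the downstream portion of the cascade is statistically independent of the feedback, and this is precisely what lets us re-randomize it in copy (iii) without changing the expected spread. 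The hybrid policy is then analyzed to show that its spread both (a) lower-bounds the non-adaptive optimum up to losing one factor of $2$ (from comparing "walk down the tree with fresh randomness" to "the actual adaptive seeds"), and (b) upper-bounds $\sigma(\pi^*)$ divided by another factor of $2$ (from splitting "first-layer contribution" versus "downstream contribution" and arguing each is captured). Multiplying the two losses gives the factor $4$.

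The main obstacle — and where the real ingenuity lies — is step (b): showing that the spread one can recover without adaptivity, after decoupling the downstream cascade, loses at most a factor of $2$ relative to $\sigma(\pi^*)$. One expects to handle this by a charging argument: every node $v$ counted in $\sigma(\pi^*)$ is reached from some adaptively-chosen seed $u$ either "within the first layer" or "through the first layer of some seed"; in the former case a non-adaptive policy that picks $u$ with the right marginal probability recovers it, and in the latter case the independence of the downstream realization lets the hybrid recover the expected contribution. Summing these two cases, and being careful that a node is not double-lost, yields the $2$. A secondary technical point is making the random walk well-defined when the tree has depth $k$ and handling the cardinality constraint $|V(\pi,\Phi)|\le k$ consistently across the three coupled runs; I expect this to be routine bookkeeping once the three-copy coupling is set up correctly. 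I would also verify the construction is tight enough that nothing worse than $4$ slips in — e.g.\ that the two factors of $2$ genuinely compose rather than multiply with extra slack — though matching the lower bound $e/(e-1)$ is explicitly not attempted here.
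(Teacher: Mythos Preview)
Your high-level scaffolding is right --- random-walk non-adaptive policy on the decision tree, a hybrid construction involving three independent realizations, and a final bound of $4 = 2\times 2$ --- and this matches the paper. But the specific roles you assign to the three copies, and the way you propose to extract the two factors of $2$, do not match the paper and, as stated, do not close the gap that the failure of adaptive submodularity creates.

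In the paper, all three independent realizations $\Phi^1,\Phi^2,\Phi^3$ serve the \emph{same} purpose: each gives every selected seed an extra independent chance to activate its out-neighbors (the ``$t$-th aggregate influence spread'' $\sigma^t$, where the seed set's out-edges are the union over $t$ copies but the rest of the graph uses only $\Phi^1$). None of the three copies is ``for the random walk'' or ``for the downstream cascade.'' The chain of inequalities is
\[
\sigma(\pi) \;\le\; \sigma^3(\pi) \;\le\; 2\,\sigma^2(\cW(\pi)) \;\le\; 4\,\sigma(\cW(\pi)),
\]
where the first inequality is trivial (more activation chances only help), the last is the easy bound $\sigma^t(S)\le t\cdot\sigma(S)$, and the middle inequality is the heart of the argument: one shows the pointwise marginal inequality $\Delta_{f^3}(u\mid\psi^1)\le 2\,\Delta_{f^2}(u\mid\dom(\psi^1))$. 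The reason three copies are needed is that conditioning on $\psi^1$ pins down $\Phi^1_S$, so the fresh copies $\Phi^2_S,\Phi^3_S$ supply independent out-edge randomness that lets one compare to the \emph{unconditioned} non-adaptive marginal via a submodularity-on-live-edge-graphs argument.

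Your proposal instead puts copy (iii) on the \emph{downstream} cascade and argues that, since myopic feedback reveals only the first layer, the downstream portion can be re-randomized for free. That observation is correct but does not address the actual obstruction: the adaptive--nonadaptive tension lives entirely in the \emph{first layer}, because the policy's choices are correlated with the seeds' out-edge realizations. Re-randomizing downstream leaves that correlation untouched, so your step (a) --- ``walk down the tree with fresh randomness vs.\ the actual adaptive seeds loses a factor $2$'' --- is precisely the step where adaptive submodularity would ordinarily be invoked, and you have not supplied a replacement. Likewise, the ``first-layer vs.\ downstream'' split in your step (b) is not how either factor of $2$ arises in the paper. To make the argument go through, you need the extra independent copies on the \emph{seed out-edges themselves}, not on the downstream graph.
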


\paragraph{Proof outline.} 
We now outline the main ideas and the structure of the proof of Theorem~\ref{theorem:adaptive-gap-upper}.
The main idea is to show that for each adaptive policy $\pi$, we could construct a non-adaptive
	randomized policy $\cW(\pi)$, such that the adaptive influence spread $\sigma(\pi)$ is at most
	four times the non-adaptive influence spread of $\cW(\pi)$, denoted $\sigma(\cW(\pi))$.
This would immediately imply Theorem~\ref{theorem:adaptive-gap-upper}.
The non-adaptive policy $\cW(\pi)$ is constructed by viewing adaptive policy $\pi$ as a 
	decision tree with leaves representing the final seed set selected
	(Definition~\ref{definition:decision-tree-adaptive-policy}), and $\cW(\pi)$ simply
	samples such a seed set based on the distribution of the leaves
	(Definition~\ref{definition:random-non-adaptive}).
The key to connect $\sigma(\pi)$ with $\sigma(\cW(\pi))$ is by introducing a fictitious
	hybrid policy $\bp$, such that $\sigma(\pi) \le \bfa(\bp) \le 4 \sigma(\cW(\pi))$,
	where $\bfa(\bp)$ is the {\em aggregate adaptive influence spread}
	(defined in Eqs.~\eqref{eq: utility-fake} and~\eqref{eq: utility-hybrid-strategy}).
Intuitively, $\bp$ works on three independent realizations $\Phi^1, \Phi^2, \Phi^3$, such that
	it adaptively selects seeds just as $\pi$ working on $\Phi^1$, but each selected seed
	has three independent chances to activate its out-neighbors accordingly the union of
	$\Phi^1, \Phi^2, \Phi^3$.
The inequality $\sigma(\pi) \le \bfa(\bp) $ is immediate and the main effort is on proving
	$\bfa(\bp) \le 4 \sigma(\cW(\pi))$.
	
To do so, we first introduce general notations $\sigma^t(S)$ and $\sigma^t(\pi)$ with $t=1,2,3$, 
	where  $\sigma^t(S)$ is the
	{\em $t$-th aggregate influence spread} for a seed set $S$ and
	$\sigma^t(\pi)$ is the {\em $t$-th aggregate adaptive influence spread} for an adaptive policy $\pi$,
	and they mean that all seed nodes have $t$ independent chances to activate their out-neighbors.
Obviously, $\bfa(\bp) = \sigma^3(\pi)$ and $\sigma(\cW(\pi)) = \sigma^1(\cW(\pi))$.
We then represent $\sigma^t(S)$ and $\sigma^t(\pi)$ as a summation of $k$ non-adaptive 
	marginal gains $\Delta_{f^{t}}(u \mid \dom(\psi^1))$'s and
	adaptive marginal gains  $\Delta_{f^{t}}(u \mid \psi^1)$'s, respectively
	(Definition~\ref{definition:conditional-exptected-marginal-utility-hybrid} and
	Lemma~\ref{lem:marginal}), with respect to the different levels of the decision tree.
Next, we establish the key connection between the adaptive marginal gain and
	the nonadaptive marginal gain (Lemma~\ref{lemma:hybrid-vs-non-adaptive}):
	$ \Delta_{f^{3}}(u \mid \psi^1) \leq 2\Delta_{f^{2}}(u \mid \dom(\psi^1))$.
This immediately implies  that $\sigma^3(\pi) \le 2\sigma^2(\cW(\pi))$.
Finally, we prove that the $t$-th aggregate non-adaptive influence spread $\sigma^t(S)$ is
	bounded by $t\cdot \sigma(S)$, which implies that $\sigma^2(\cW(\pi)) \le 2\sigma(\cW(\pi))$.
This concludes the proof.

We remark that our introduction of the hybrid policy $\bp$ is inspired by the analysis in
	\cite{bradac2019near}, which shows that the adaptivity gap for the 
	\emph{stochastic multi-value probing (SMP)} problem is at most $2$.
However, our analysis is more complicated than theirs and thus is novel in several aspects.
First, the SMP problem is simpler than our problem,
	with the key difference being that SMP is adaptive submodular but our problem is not.
Therefore, we cannot apply their way of inductive reasoning that implicitly relies on adaptive 
	submodularity.
Instead, we have to use our marginal gain representation and redo the bounding analysis 
	carefully based on
	the (non-adaptive) submodularity of the influence utility function on live-edge graphs.
Moreover, our influence utility function is also sophisticated and we have to use three 
	independent realizations in order to apply the submodularity on live-edge graphs, which results in an adaptivity bound of $4$, while their analysis only needs two 
	independent realizations to achieve a bound of $2$.
We now provide the technical proof of Theorem~\ref{theorem:adaptive-gap-upper}.
We first formally define the decision tree representation.

\begin{definition}[Decision tree representation for adaptive policy]
	\label{definition:decision-tree-adaptive-policy}
	An adaptive policy $\pi$ can be seen as a decision tree $\T(\pi)$,
	where each node $s$ of $\T(\pi)$ corresponds to a partial realization $\psi_s$, 
	with the root being the empty partial realization, and node $s'$ is a child of $s$
	if $\psi_{s'} = \psi_{s} \cup \{\pi(\psi_s), \phi(\pi(\psi_s)) \}$ for some realization
	$\phi \sim \psi_s$. 
	Each node $s$ is associated with a probability $p_s$, which is the probability that the policy $\pi$ generates partial realization $\psi_s$, i.e. the probability that 
	the policy would walk on the tree from the root to node $s$.
	
\end{definition}


Next we define the non-adaptive randomized policy $\cW(\pi)$, which randomly selects
	a leaf of $\T(\pi)$.

\begin{definition}[Random-walk non-adaptive policy \cite{gupta2017adaptivity}]
	\label{definition:random-non-adaptive}
	For any adaptive policy $\pi$, let $\aL(\pi)$ denote the set of leaves of $\T(\pi)$. Then we construct a randomized non-adaptive policy $\cW(\pi)$ as follows: for any leaf $\ell \in \aL(\pi)$, $\cW(\pi)$ picks leaf $\ell$ with probability $p_\ell$ and selects $\dom(\psi_{\ell})$ as the seed set.
\end{definition}


Before proceeding further with our analysis, we introduce some notations for the 
myopic feedback model.
In the myopic feedback model, we notice that the state spaces for all nodes are 
mutually independent and disjoint. 
Thus we could decompose the realization space $\cR$ into independent subspace, $\cR = 
\times_{u\in V} O_u$, where $O_{u}$ is the set of all possible states for node $u$. For any full realization $\phi$ (resp. partial realization $\psi$), we would use $\phi_{S}$ (resp. $\psi_{S}$) to denote the feedback for the node set $S \subseteq V$. 
Note that $\phi_S$ and $\psi_S$ are partial realizations with domain $S$.
Similarly, we would also use $\cP_{S}$ to denote the probability 
space $\times_{u \in S}\cP_{u}$, where $\cP_{u}$ is the probability distribution over $O_u$
(i.e. each out-going edge $(u,v)$ of $u$ is live with independent probability $p_{uv}$). 
With a slight abuse of notation, we further use 
$\phi_{S}$ (resp. $\psi_{S}$) to denote the set of live edges leaving from $S$
under $\phi$ (resp. $\psi$). 
Then we could use notation $\phi^1_S \cup \phi^2_S$ to represent the union of live-edges
from $\phi^1$ and $\phi^2$ leaving from $S$, and similarly  $\psi^1 \cup \phi^2_S$ with
$\dom(\psi) = S$.

\paragraph{Construction for hybrid policy.} For any adaptive policy $\pi$, we define 
a fictitious hybrid policy $\bp$ that works on
three independent random realizations $\Phi^1$, $\Phi^2$ and $\Phi^3$ simultaneously, thinking about them
	as from three copies of the graphs $G_1$, $G_2$ and $G_3$.
Note that $\bp$ is not a real adaptive
policy --- it is only used for our analytical purpose to build connections between
the adaptive policy $\pi$ and the non-adaptive policy $\cW(\pi)$.
In terms of adaptive seed selection, $\bp$ acts exactly the same as $\pi$ on $G_1$, 
	responding to partial realizations $\psi^1$ obtained so far from the full realization
	$\Phi^1$ of $G_1$, and disregarding the realizations $\Phi^2$ and $\Phi^3$.
However, the difference is when we define adaptive influence spread for $\bp$, we aggregate
	the three partial realizations on the seed set together.
More precisely, 
%
%
%
%
	for any $t = 1,2,3$, we define the $t$-th aggregate influence utility function as
$f^{t}: 2^V \times \cR^{t} \rightarrow \R^{+}$
\begin{align}\label{eq: utility-fake}
f^{t}\left(S, \phi^1, \cdots, \phi^t\right) := f\left(S, (\cup_{i \in [t]}\phi^{i}_{S}, \phi^{1}_{V \backslash S})\right),
\end{align}
where $(\cup_{i \in [t]}\phi^{i}_{S}, \phi^{1}_{V \backslash S})$ means a new realization
$\phi'$ where on set $S$ its set of out-going live-edges is the same as union of $\phi^1, \cdots \phi^t$
and on set $V\setminus S$, its set of out-going live-edges is the same as $\phi^1$, and $f$ is the original influence utility function
	defined in Section~\ref{sec:model}.
The objective of the hybrid policy $\bp$ is then defined as the adaptive influence 
spread under policy $\bp$, i.e.,
\begin{align}
\label{eq: utility-hybrid-strategy}
\bfa(\bp) & := 
\E_{\Phi^1, \Phi^2, \Phi^3 \sim \cP}\left[f^{3}(V(\pi, \Phi^1),\Phi^1,\Phi^2, \Phi^3)\right] \nonumber \\
&=\E_{\Phi^1, \Phi^2, \Phi^3 \sim \cP}\left[f\left(V(\pi, \Phi^1), (\Phi^{1}_{V(\pi, \Phi^1)} \cup \Phi^{2}_{V(\pi, \Phi^1)}\cup \Phi^{3}_{V(\pi, \Phi^1)}, \Phi^{1}_{V\backslash V(\pi, \Phi^1)})\right)\right].
\end{align}

In other words, the adaptive influence spread of the hybrid policy $\bp$ is the influence spread of seed nodes 
	$V(\pi, \Phi^1)$ selected in graph $G_1$ by policy $\pi$, 
where the live-edge graph on the seed set part $V(\pi, \Phi^1)$ is the union of
live-edge graphs of $G_1$, $G_2$ and $G_3$, and the live-edge graph on the non-seed set part
is only that of $G_1$.
It can also be viewed as each seed node has three independent chances to activate its out-neighbors.
Since the hybrid policy $\bp$ acts the same as policy $\pi$ on influence graph $G_1$, we can easily conclude:
\begin{claim}
	\label{claim:hybrid-vs-adaptive}
	$\bfa(\bp) \geq \fa(\pi).$
\end{claim}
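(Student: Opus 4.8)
The plan is to prove the claim $\bfa(\bp) \geq \fa(\pi)$ by a pointwise (realization-by-realization) comparison, exploiting the fact that adding more live edges to any live-edge graph can only increase the set of reachable nodes, i.e. monotonicity of $f(S,\cdot)$ in the edge set. The key observation is that the hybrid policy $\bp$ selects exactly the same seeds as $\pi$ does on $G_1$, namely $V(\pi, \Phi^1)$, and evaluates influence on a live-edge graph that \emph{contains} the live-edge graph $\Phi^1$ as a subgraph.

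Concretely, I would first fix the three independent realizations $\Phi^1, \Phi^2, \Phi^3$ and set $S = V(\pi, \Phi^1)$. Then I would compare the two live-edge graphs: the realization $\Phi^1$ used in computing $f(S, \Phi^1)$, and the ``aggregated'' realization $\phi' = (\Phi^{1}_{S} \cup \Phi^{2}_{S} \cup \Phi^{3}_{S},\, \Phi^{1}_{V\setminus S})$ used in computing $f^{3}(S,\Phi^1,\Phi^2,\Phi^3)$. On the node set $V\setminus S$ the two realizations have identical out-going live edges (both equal $\Phi^1_{V\setminus S}$); on the node set $S$, the live edges of $\phi'$ are $\Phi^1_S \cup \Phi^2_S \cup \Phi^3_S \supseteq \Phi^1_S$, which are exactly the live edges of $\Phi^1$ on $S$. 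Hence the live-edge set of $\phi'$ contains that of $\Phi^1$, so $\Gamma(S, \Phi^1) \subseteq \Gamma(S, \phi')$, giving $f(S,\Phi^1) = |\Gamma(S,\Phi^1)| \le |\Gamma(S,\phi')| = f^{3}(S,\Phi^1,\Phi^2,\Phi^3)$.

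Finally I would take expectations over $\Phi^1,\Phi^2,\Phi^3 \sim \cP$ on both sides. The right-hand side is by definition $\bfa(\bp)$ (Eq.~\eqref{eq: utility-hybrid-strategy}). For the left-hand side, since the inequality holds for every choice of $\Phi^2,\Phi^3$ and the left-hand side $f(V(\pi,\Phi^1),\Phi^1)$ depends only on $\Phi^1$, its expectation is just $\E_{\Phi^1\sim\cP}[f(V(\pi,\Phi^1),\Phi^1)] = \sigma(\pi)$, the adaptive influence spread of $\pi$. Combining, $\sigma(\pi) \le \bfa(\bp)$.

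There is essentially no hard obstacle here: the only subtlety to get right is the bookkeeping of which part of the realization (seed set vs.\ complement) each graph contributes, and making sure that the seeds selected by $\bp$ genuinely coincide with $V(\pi,\Phi^1)$ so that the same $S$ appears on both sides of the inequality. That coincidence is exactly the design of $\bp$ stated in the construction paragraph (``$\bp$ acts exactly the same as $\pi$ on $G_1$''), and monotonicity of reachability in the live-edge set is immediate, so the argument is short.
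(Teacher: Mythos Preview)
Your proposal is correct and is exactly the argument the paper has in mind: the paper treats the claim as immediate from the construction of $\bp$ (it selects the same seeds $V(\pi,\Phi^1)$ as $\pi$) together with the monotonicity of $f(S,\cdot)$ in the live-edge set, and you have simply spelled out that one-line observation in full detail.
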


We also define $t$-th aggregate influence spread for a seed set $S$, $\sigma^t(S)$, as
		$ \sigma^{t}(S) = \E_{\Phi^1, \cdots, \Phi^t \sim \cP}\left[f^{t}(S,\Phi^1, \cdots, \Phi^t)\right]$.
Then, for the random-walk non-adaptive policy $\cW(\pi)$, we define
	$\sigma^t(\cW(\pi)) = \sum_{\ell \in \aL(\pi)} p_{\ell} \cdot \sigma^t(\dom(\psi_{\ell}))$,
	that is, the $t$-th aggregate influence spread of $\cW(\pi)$ is the average
	$t$-th aggregate influence spread of seed nodes selected by $\cW(\pi)$ according to 
	distribution of the leaves in the decision tree $\T(\pi)$.
{Similarly, we define the $t$-th aggregate adaptive influence spread for an adaptive policy $\pi$ as $ \sigma^{t}(\pi) = \E_{\Phi^1, \cdots, \Phi^t \sim \cP}\left[f^{t}(V(\pi, \Phi^1),\Phi^1, \cdots, \Phi^t)\right]$.} 
Note that $\bfa(\bp) = \sigma^3(\pi)$.

Now, we could give a definition for the conditional expected marginal gain for the 
	aggregate influence utility function $f^t$ over live-edge graph distributions.
\begin{definition}
	\label{definition:conditional-exptected-marginal-utility-hybrid}
	The expected non-adaptive marginal gain of $u$ given set $S$ 
	under $f^t$ is defined as:
	\begin{align}
	\label{eq:exptected-marginal-utility-hybrid}
	\Delta_{f^t}(u \mid S) = \E_{\Phi^1, \cdots, \Phi^t \sim \cP}\left[f^t\left(S \cup \{u\}, \Phi^1, \cdots, \Phi^{t}\right) - f^t\left(S, \Phi^1, \cdots, \Phi^{t}\right)\right].
	\end{align}
	The expected adaptive marginal gain of $u$ given partial realization 
		$\psi^1$  under $f^t$ is defined as: 
	\begin{align}
	\label{eq:conditional-exptected-marginal-utility-hybrid}
	\Delta_{f^t}(u \mid \psi^1) = \E_{\Phi^1, \cdots, \Phi^t \sim \cP}\left[f^t\left(\dom(\psi^1) \cup \{u\}, \Phi^1, \cdots, \Phi^{t}\right) - f^t\left(\dom(\psi^1), \Phi^1, \cdots, \Phi^{t}\right) \mid \Phi^1 \sim \psi^1\right].
	\end{align}
\end{definition}

The following lemma connects $\sigma^t(\pi)$ (and thus $\bfa(\bp)$) with 
	adaptive marginal gain $\Delta_{f^t}(u \mid \psi)$, and
	connects $\sigma^t(\cW(\pi))$ with non-adaptive marginal gain $\Delta_{f^t}(u \mid S)$.
{Let 
		$\cP_{i}^{\pi}$ denote the probability distribution over 
		nodes at  depth $i$ of the decision $\T(\pi)$.}
The proof is by applying telescoping series to convert influence spread into the sum of marginal gains.

\begin{restatable}{lemma}{lemmarginal} 
\label{lem:marginal}
For any adaptive policy $\pi$, and $t \geq 1$, we have
\begin{align*}
\sigma^{t}(\pi) = \sum_{i = 0}^{k-1}\E_{s \sim \cP_{i}^{\pi}}\left[\Delta_{f^{t}}\left(\pi(\psi_s) \mid \psi_s\right)\right], \mbox{ and }
\sigma^t(\cW(\pi)) = \sum_{i = 0}^{k-1}\E_{s \sim \cP_{i}^{\pi}}\left[\Delta_{f^{t}}\left(\pi(\psi_s) \mid \dom(\psi_s)\right)\right].
\end{align*}
\end{restatable}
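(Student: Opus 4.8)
The plan is to prove both identities by telescoping along the root-to-leaf paths of the decision tree $\T(\pi)$ (Definition~\ref{definition:decision-tree-adaptive-policy}), using the two elementary facts that $f^t(\emptyset,\phi^1,\dots,\phi^t)=0$ for every realization tuple, and that the event ``$\pi$ reaches node $s$ at depth $i$'' coincides exactly with the consistency event $\Phi^1\sim\psi_s$, which has probability $p_s$. The extra realizations $\Phi^2,\dots,\Phi^t$ play no role in steering the policy (by construction $\bp$ acts on $\Phi^1$ only), so the argument is essentially the same telescoping identity one would write for $t=1$, carried through with $f^t$ in place of $f$.

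For the adaptive identity, fix a tuple $(\Phi^1,\dots,\Phi^t)$. Under $\Phi^1$ the policy traces a unique path $s_0,s_1,\dots,s_k$ from the root $s_0$ to a leaf $s_k$, and writing $S_i:=\dom(\psi_{s_i})$ we have $S_0=\emptyset$, $S_k=V(\pi,\Phi^1)$, and $S_{i+1}=S_i\cup\{\pi(\psi_{s_i})\}$. Since $f^t(\emptyset,\Phi^1,\dots,\Phi^t)=0$, telescoping gives $f^t(V(\pi,\Phi^1),\Phi^1,\dots,\Phi^t)=\sum_{i=0}^{k-1}\bigl(f^t(S_{i+1},\Phi^1,\dots,\Phi^t)-f^t(S_i,\Phi^1,\dots,\Phi^t)\bigr)$, and taking expectations (linearity) yields $\sigma^t(\pi)=\sum_{i=0}^{k-1}\E[f^t(S_{i+1},\cdots)-f^t(S_i,\cdots)]$. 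Next I would condition the $i$-th summand on the depth-$i$ node reached: I first verify, by induction on depth, that $\pi$ reaches $s$ with $\mathrm{depth}(s)=i$ iff $\Phi^1\sim\psi_s$, with probability $p_s$ (this is where the deterministic structure of $\pi$ and the definition of $p_s$ enter). Conditioned on this event $S_i=\dom(\psi_s)$ and $S_{i+1}=\dom(\psi_s)\cup\{\pi(\psi_s)\}$ are deterministic, so the conditional expectation equals $\Delta_{f^t}(\pi(\psi_s)\mid\psi_s)$ by Definition~\ref{definition:conditional-exptected-marginal-utility-hybrid}; summing over $s$ at depth $i$ gives $\E[f^t(S_{i+1},\cdots)-f^t(S_i,\cdots)]=\sum_{\mathrm{depth}(s)=i}p_s\,\Delta_{f^t}(\pi(\psi_s)\mid\psi_s)=\E_{s\sim\cP_i^\pi}[\Delta_{f^t}(\pi(\psi_s)\mid\psi_s)]$, and summing over $i$ proves the first identity.

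For the non-adaptive identity (Definition~\ref{definition:random-non-adaptive}), for each leaf $\ell$ with ancestor chain $s_0,\dots,s_k=\ell$ the same telescoping of $f^t$ over the sets $\dom(\psi_{s_0})\subseteq\dots\subseteq\dom(\psi_{s_k})$, now with no conditioning because $\cW(\pi)$ selects $\dom(\psi_\ell)$ obliviously, gives $\sigma^t(\dom(\psi_\ell))=\sum_{i=0}^{k-1}\Delta_{f^t}(\pi(\psi_{s_i})\mid\dom(\psi_{s_i}))$. Substituting into $\sigma^t(\cW(\pi))=\sum_{\ell\in\aL(\pi)}p_\ell\,\sigma^t(\dom(\psi_\ell))$, swapping the two finite sums, and using that the leaf probabilities below any internal node $s$ sum to $p_s$, I obtain $\sigma^t(\cW(\pi))=\sum_{i=0}^{k-1}\sum_{\mathrm{depth}(s)=i}p_s\,\Delta_{f^t}(\pi(\psi_s)\mid\dom(\psi_s))=\sum_{i=0}^{k-1}\E_{s\sim\cP_i^\pi}[\Delta_{f^t}(\pi(\psi_s)\mid\dom(\psi_s))]$, as claimed.

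I expect no computational difficulty here; the only place requiring care is the conditioning step in the adaptive part — pinning down that ``reaching $s$'' is precisely $\{\Phi^1\sim\psi_s\}$ with mass $p_s$, and checking that this meshes correctly with Definition~\ref{definition:conditional-exptected-marginal-utility-hybrid}, whose conditioning is also only on $\Phi^1$, so that $\Phi^2,\dots,\Phi^t$ are simply integrated out identically on both sides of every equality.
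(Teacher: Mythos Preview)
Your proposal is correct and follows essentially the same telescoping argument as the paper's proof: both identities are obtained by writing $f^t$ as a telescoping sum along the root-to-leaf path and then identifying the depth-$i$ term with the appropriate marginal gain via the event $\{\Phi^1\sim\psi_s\}$ (for the adaptive part) or via the tree probabilities (for the non-adaptive part). The only cosmetic difference is that for $\sigma^t(\cW(\pi))$ the paper introduces an auxiliary independent realization $\Phi\sim\cP$ to sample the leaf (so that $V(\pi,\Phi)_{:i}$ plays the role of $\dom(\psi_s)$), whereas you work directly with the leaf weights and the identity $\sum_{\ell\text{ below }s}p_\ell=p_s$; these are equivalent formulations of the same averaging.
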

The next lemma bounds two intermediate adaptive marginal gains to be used
	for Lemma~\ref{lemma:hybrid-vs-non-adaptive}.
The proof crucially depend on (a) the independence of realizations $\Phi^1,\Phi^2,\Phi^3$,
	(b) the independence of feedback of different selected seed nodes, and
	(c) the submodularity of the influence utility function
	on live-edge graphs.


\begin{restatable}{lemma}{lemAdaptiveGapMyopicOne}
	\label{lemma:adaptive-gap-myopic1}
	Let $S = \dom(\psi^1)$ and $\Sps = S \cup \{u\}$ for 
	any partial realization $\psi^1$ and any $u\not\in \dom(\psi^1)$. Then 
		we have
	\begin{align}
	&\E_{\Phi^1, \Phi^2, \Phi^3 \sim \cP}\left[f\left(\Sps, (\Phi^{1}_{S} \cup \Phi^{2}_{S} \cup \Phi^{3}_{S}, \Phi^{1}_{u} \cup \Phi^{2}_{u}, \Phi^{1}_{V \backslash \Sps}) \right) \right.\nonumber \\ 
	& \qquad - \left. f\left(S, (\Phi^{1}_{S} \cup \Phi^{2}_{S} \cup \Phi^{3}_{S}, \Phi^{1}_{V \backslash S}) \right) \mid \Phi^1 \sim \psi^1\right] \leq \Delta_{f^2}(u \mid S).\label{eq:adaptive-gap-upper5}
	\end{align}
	\begin{align}
&\E_{\Phi^1, \Phi^2, \Phi^3 \sim \cP}\left[f\left(\Sps,(\Phi^{1}_{S} \cup \Phi^{2}_{S} \cup \Phi^{3}_{S}, \Phi^{1}_{u} \cup \Phi^{2}_{u} \cup \Phi^{3}_{u} , \Phi^{1}_{V \backslash \Sps}) \right)\right.\nonumber\\
&\qquad - \left.f\left(\Sps, (\Phi^{1}_{S} \cup \Phi^{2}_{S} \cup \Phi^{3}_{S}, \Phi^{1}_{u} \cup \Phi^{2}_{u}, \Phi^{1}_{V \backslash \Sps}) \right) \mid \Phi^1 \sim \psi^1\right]	
\leq \Delta_{f^2}(u \mid S).\label{eq:adaptive-gap-upper6}
\end{align}
\end{restatable}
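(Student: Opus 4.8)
The plan is to reduce both inequalities to one combinatorial monotonicity statement about reachability in a fixed live-edge graph, and then dispose of the conditioning on $\psi^1$ together with the surplus copies of the realizations by an independence-based coupling against the two fresh realizations hidden inside $\Delta_{f^2}(u\mid S)$. Write $f(A,L)=|\Gamma(A,L)|$ for a live-edge graph $L$. First I would record three ingredients. (i) Since $\Gamma(A\cup B,L)=\Gamma(A,L)\cup\Gamma(B,L)$, the marginal of $u$ over a set $A$ in $L$ equals $|\Gamma(\{u\},L)\setminus\Gamma(A,L)|$, and more generally $f(\Sps,L_1)-f(S,L_2)=|\Gamma(\{u\},L_1)\setminus\Gamma(S,L_2)|$ whenever $L_1\supseteq L_2$ differ only in out-edges of $u$. (ii) [the key lemma] Fix a vertex set $N$; if $L'\supseteq L$ differ only in out-edges of $S$, then $\Gamma(N,L')\setminus\Gamma(S,L')\subseteq\Gamma(N,L)\setminus\Gamma(S,L)$, and the same holds when $\Gamma(N,\cdot)$ and $\Gamma(S,\cdot)$ are evaluated in two graphs that agree off the out-edges of $S$, provided the graph used for $\Gamma(N,\cdot)$ has $u$ as a sink. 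The proof is a shortest-path argument: a simple path witnessing $v\in\Gamma(N,L')\setminus\Gamma(S,L')$ cannot use any out-edge of $S$, for otherwise its suffix from the last $S$-vertex would place $v$ in $\Gamma(S,L')$; hence the path already lies in $L$. (iii) $f(\cdot,L)$ is monotone submodular in its set argument for fixed $L$ (a coverage function), as recalled in the excerpt.

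For \eqref{eq:adaptive-gap-upper5}: by (i) the left integrand is $g(X_S,Y^1_u,Y^2_u,Z)$, where $g$ denotes $|\Gamma(\{u\},H_1)\setminus\Gamma(S,H_2)|$ for $H_1$ (resp.\ $H_2$) the live-edge graph whose out-edges are $X_S$ on $S$, $Y^1_u\cup Y^2_u$ (resp.\ $Y^1_u$) on $u$, and $Z$ on $V\setminus\Sps$, evaluated at $X_S=\Phi^1_S\cup\Phi^2_S\cup\Phi^3_S$, $Y^1_u=\Phi^1_u$, $Y^2_u=\Phi^2_u$, $Z=\Phi^1_{V\setminus\Sps}$; one then checks $\Delta_{f^2}(u\mid S)=\E[g(\hat X_S,\hat Y^1_u,\hat Y^2_u,\hat Z)]$ with the four arguments drawn from two fresh realizations, $\hat X_S$ a union of two copies and the others single copies. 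By (ii), $g$ is non-increasing in $X_S$. Since the feedbacks of distinct nodes are independent, conditioning on $\Phi^1\sim\psi^1$ only freezes $\Phi^1_S$ and leaves $(\Phi^1_u,\Phi^2_u,\Phi^1_{V\setminus\Sps})$ with exactly the joint law of $(\hat Y^1_u,\hat Y^2_u,\hat Z)$, while $\Phi^1_S\cup\Phi^2_S\cup\Phi^3_S=\psi^1\cup\Phi^2_S\cup\Phi^3_S$ stochastically dominates $\hat X_S$. Coupling so that the left $X_S$ contains the right one and the other three arguments coincide, monotonicity of $g$ yields the inequality after taking expectations.

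For \eqref{eq:adaptive-gap-upper6}: here the left integrand is the gain from letting $u$ fire one extra, independent copy $\Phi^3_u$ of its out-edges. I would observe that every gained vertex is reachable, \emph{without ever passing through $u$ again}, from the set $N_3$ of out-neighbours of $u$ that $\Phi^3_u$ activates; hence the integrand is at most $|\Gamma(N_3\setminus\Sps,M^\circ)\setminus\Gamma(\Sps,M)|$, where $M$ carries $\Phi^1_S\cup\Phi^2_S\cup\Phi^3_S$ on $S$, $\Phi^1_u\cup\Phi^2_u$ on $u$, and $\Phi^1$ on the rest, and $M^\circ$ is $M$ with all out-edges of $u$ deleted. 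Enlarging the first-term seed set to $N_3\setminus S$ and replacing the subtracted $\Gamma(\Sps,M)$ by the smaller $\Gamma(S,M)$ (monotone steps; a tighter variant here uses submodularity (iii) in $M^\circ$), and then reducing the three copies on $S$ to two copies $\hat\Phi^{12}_S$ by (ii) and stochastic domination (legitimate because $M^\circ$ has $u$ as a sink and $\Phi^3_S$ is independent of $N_3$), bounds the integrand by $|\Gamma(N_3\setminus S,M^{\circ,-})\setminus\Gamma(S,M^-)|$, where $M^-,M^{\circ,-}$ are $M,M^\circ$ with the two-copy $S$-edges. Finally I use $\Delta_{f^2}(u\mid S)=\E[|\Gamma(\{u\},\hat L_1)\setminus\Gamma(S,\hat L_2)|]\ge\E[|\Gamma(\hat N_2\setminus S,\hat L_1^\circ)\setminus\Gamma(S,\hat L_2)|]$ --- where $\hat N_2$ is the first layer activated by \emph{one} of the two copies of $u$'s out-edges in $\Delta_{f^2}$ and $\hat L_1^\circ$ is $\hat L_1$ with $u$ made a sink --- and couple: the single copy $\Phi^3_u$ feeding $N_3$ to that copy, the two $S$-copies to $\hat\Phi^{12}_S$, and one of $u$'s two propagation copies in $M^-$ to the single copy of $u$'s out-edges kept in $\hat L_2$, with the remaining out-edges matched; this makes the left ``newly reachable'' set contained in the right one while keeping every marginal law correct, and the inequality follows after taking expectations.

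The main obstacle is exactly this last coupling. A naive matching fails because the left-hand quantity effectively spends three independent copies of $u$'s out-edges --- one to activate the new first layer $N_3$, two to propagate back through $u$ --- whereas $\Delta_{f^2}(u\mid S)$ carries only two; the escape is the structural fact that, once the extra edge has been used, the diffusion never returns to $u$, so the two ``propagation-through-$u$'' copies drop out of the first term (this is precisely where $M^\circ$, the $u$-sink graph, and the general-$N$ case of (ii) earn their keep), and two copies then suffice --- which is also the reason the whole argument closes with three independent realizations rather than needing a fourth. Everything else is bookkeeping: confirming that each edge set that appears is independent of the conditioning on $\psi^1$ (true because $\psi^1$ constrains only $\Phi^1_S$), and that the chosen couplings reproduce the intended marginals.
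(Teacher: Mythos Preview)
Your argument is correct, and for \eqref{eq:adaptive-gap-upper5} it is essentially identical to the paper's: both freeze $\Phi^1_S=\psi^1$, relabel the remaining independent pieces, and reduce to a pointwise reachability containment showing that adding $\psi^1$ to the $S$-out-edges can only shrink the set of vertices newly reached via $u$.

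For \eqref{eq:adaptive-gap-upper6} your route works but is considerably more elaborate than the paper's. After the same relabeling, the paper lower-bounds the right-hand side by monotonicity (replacing $f(S,\cdot)$ by $f(\Sps,\cdot)$ in the subtracted term) and then proves a \emph{single} pointwise containment: for every fixed $\phi^2_S,\phi^3_S,\phi^1_u,\phi^2_u,\phi^3_u,\phi^2_{V\setminus\Sps}$,
\[
\Gamma\bigl(\Sps,(\psi^1\cup\phi^2_S\cup\phi^3_S,\phi^1_u\cup\phi^2_u\cup\phi^3_u,\cdot)\bigr)\setminus\Gamma\bigl(\Sps,(\psi^1\cup\phi^2_S\cup\phi^3_S,\phi^1_u\cup\phi^2_u,\cdot)\bigr)
\]
is contained in
\[
\Gamma\bigl(\Sps,(\phi^2_S\cup\phi^3_S,\phi^2_u\cup\phi^3_u,\cdot)\bigr)\setminus\Gamma\bigl(\Sps,(\phi^2_S\cup\phi^3_S,\phi^2_u,\cdot)\bigr).
\]
The path witnessing a vertex in the left set must start with an edge in $\phi^3_u$ (else already reachable), is simple so never revisits $u$, and avoids $S$ (else already reachable from $S$); hence it survives after deleting both the surplus $\psi^1$ at $S$ and the surplus $\phi^1_u$ at $u$. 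This one step dispatches simultaneously what you split into the $u$-sink reduction, the $S$-edge reduction via your lemma~(ii), and the final coupling. Your worry about ``three copies at $u$ versus two in $\Delta_{f^2}$'' evaporates here because the comparison is between two \emph{marginal gains of the same extra copy} $\phi^3_u$ over nested base graphs, not between absolute edge counts; the sink-graph machinery, while correct, is unnecessary.
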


Combining the two inequalities above, we obtain the following key lemma essential, 
	which bounds the adaptive marginal gain
	$\Delta_{f^{3}}(u \mid \psi^1) $ with the 
	non-adaptive marginal gain $\Delta_{f^{2}}(u \mid \dom(\psi^1))$.
\begin{restatable}{lemma}{lemHybridvsNonAdaptive}
	\label{lemma:hybrid-vs-non-adaptive}
	For any partial realization $\psi^1$ and node $u \notin \dom(\psi^1)$, we have
	\begin{align}
	\label{eq:adaptive-gap-upper13}
	\Delta_{f^{3}}(u \mid \psi^1) \leq 2\Delta_{f^{2}}(u \mid \dom(\psi^1)).
	\end{align}
\end{restatable}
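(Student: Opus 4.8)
The plan is to telescope $\Delta_{f^{3}}(u \mid \psi^1)$ through a single intermediate quantity and then read off the two bounds from Lemma~\ref{lemma:adaptive-gap-myopic1}. Throughout set $S = \dom(\psi^1)$ and $\Sps = S \cup \{u\}$. First I would unfold the definition of $f^3$ (using $\Phi^i_{\Sps} = \Phi^i_S \cup \Phi^i_u$) to write the integrand of $\Delta_{f^{3}}(u \mid \psi^1)$ as $f^3(\Sps,\Phi^1,\Phi^2,\Phi^3) - f^3(S,\Phi^1,\Phi^2,\Phi^3)$, where $f^3(\Sps,\Phi^1,\Phi^2,\Phi^3) = f(\Sps,(\Phi^1_S\cup\Phi^2_S\cup\Phi^3_S,\Phi^1_u\cup\Phi^2_u\cup\Phi^3_u,\Phi^1_{V\setminus\Sps}))$ and $f^3(S,\Phi^1,\Phi^2,\Phi^3) = f(S,(\Phi^1_S\cup\Phi^2_S\cup\Phi^3_S,\Phi^1_{V\setminus S}))$.

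Then I would insert the intermediate term $B := f(\Sps,(\Phi^1_S\cup\Phi^2_S\cup\Phi^3_S,\Phi^1_u\cup\Phi^2_u,\Phi^1_{V\setminus\Sps}))$ — informally, the value when $u$ is a seed but receives only two independent activation chances — and split the integrand as $\bigl(f^3(\Sps,\cdot) - B\bigr) + \bigl(B - f^3(S,\cdot)\bigr)$. Taking the conditional expectation over $\Phi^1,\Phi^2,\Phi^3\sim\cP$ given $\Phi^1\sim\psi^1$ and using linearity of expectation, the expectation of the first bracket is \emph{exactly} the left-hand side of \eqref{eq:adaptive-gap-upper6}, and the expectation of the second bracket is \emph{exactly} the left-hand side of \eqref{eq:adaptive-gap-upper5}; by Lemma~\ref{lemma:adaptive-gap-myopic1} each is at most $\Delta_{f^2}(u \mid S)$. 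Adding the two inequalities yields $\Delta_{f^{3}}(u \mid \psi^1) \le 2\Delta_{f^2}(u\mid S) = 2\Delta_{f^{2}}(u \mid \dom(\psi^1))$, as claimed.

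There is essentially no obstacle here: all the real work — the submodularity of $f$ on live-edge graphs, together with the independence of $\Phi^1,\Phi^2,\Phi^3$ and of the feedback of distinct seeds — is already carried out inside Lemma~\ref{lemma:adaptive-gap-myopic1}. The only point needing care is the bookkeeping of the realization notation, in particular that in $f^3(S,\cdot)$ the out-edges of $u$ (which lies in $V\setminus S$ since $u\notin\dom(\psi^1)$) are governed by $\Phi^1$ alone, so that $B - f^3(S,\cdot)$ matches \eqref{eq:adaptive-gap-upper5} verbatim. Choosing $B$ to be the ``two-chances-for-$u$'' hybrid is precisely what makes both halves of the decomposition line up with the two pre-established inequalities.
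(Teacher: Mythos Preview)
Your proposal is correct and matches the paper's proof essentially verbatim: the paper also inserts the intermediate term $f\bigl(\Sps,(\Phi^1_S\cup\Phi^2_S\cup\Phi^3_S,\Phi^1_u\cup\Phi^2_u,\Phi^1_{V\setminus\Sps})\bigr)$, splits $\Delta_{f^3}(u\mid\psi^1)$ into the two resulting pieces, and bounds each by $\Delta_{f^2}(u\mid S)$ via the two inequalities of Lemma~\ref{lemma:adaptive-gap-myopic1}. The bookkeeping point you flag about $u\in V\setminus S$ is exactly the one needed to make the second piece coincide with \eqref{eq:adaptive-gap-upper5}.
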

The next lemma gives an upper bound on the $t$-th aggregate (non-adaptive) influence spread 
	$\sigma^{t}(S)$  using the original influence spread $\sigma(S)$.
The idea of the proof is that each seed node in $S$ has $t$ independent chances to active its 
	out-neighbors, but afterwards the diffusion is among nodes not in $S$ as in the original
	diffusion.
\begin{restatable}{lemma}{leminfluenceSpreadHighOrder}
	\label{lemma:influence-spread-high-order}
	For any $t \geq 1$ and any subset $S\subseteq V$, 
	$\sigma^{t}(S)  \leq t\cdot \sigma(S)$.
\end{restatable}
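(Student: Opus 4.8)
The plan is to prove $\sigma^{t}(S) \le t \cdot \sigma(S)$ by a coupling argument on live-edge graphs. Fix a seed set $S$. Recall that $\sigma^{t}(S) = \E_{\Phi^1,\ldots,\Phi^t}[f(S, (\cup_{i\in[t]}\Phi^i_S, \Phi^1_{V\setminus S}))]$: a node $v$ contributes to $f^t$ exactly when it is reachable from $S$ in the composite live-edge graph whose edges out of $S$ come from the union $\cup_{i\in[t]}\Phi^i_S$ and whose edges out of $V\setminus S$ come from $\Phi^1$ alone. The key structural observation is that any path from $S$ to $v$ in this composite graph consists of a single first edge leaving $S$ (live in some $\Phi^j_S$, $j\in[t]$) followed by a path lying entirely within $V\setminus S$ using edges of $\Phi^1$. (Here I would note the mild technical point that we may assume $S$ induces no useful internal structure — or more carefully, a path may re-enter $S$; the cleanest route is to treat edges into $S$ as irrelevant since $S$ is already activated, so only the first exit edge from $S$ and the subsequent $\Phi^1$-path in $V\setminus S$ matter.)

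The main step is then a union bound across the $t$ copies. For each $j\in[t]$, let $A_j$ denote the set of nodes reachable from $S$ using only the $\Phi^j$ edges out of $S$ for the first step and $\Phi^1$ edges thereafter inside $V\setminus S$; formally $A_j = \Gamma(S, (\Phi^j_S, \Phi^1_{V\setminus S}))$. Then the reachable set in the composite graph is contained in $\bigcup_{j\in[t]} A_j$, so $f^t(S,\Phi^1,\ldots,\Phi^t) = |\Gamma(S,(\cup_i \Phi^i_S, \Phi^1_{V\setminus S}))| \le \sum_{j\in[t]} |A_j|$. Taking expectations and using linearity, $\sigma^t(S) \le \sum_{j\in[t]} \E[|A_j|]$. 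It remains to argue that for each $j$, $\E_{\Phi^1,\ldots,\Phi^t}[|A_j|] = \sigma(S)$. This is where the independence of the realizations is used: $A_j$ depends only on $\Phi^j_S$ and $\Phi^1_{V\setminus S}$, and since $\Phi^j_S$ and $\Phi^1_{V\setminus S}$ are jointly distributed exactly as the restriction of a single realization $\Phi\sim\cP$ to the edges out of $S$ and the edges out of $V\setminus S$ respectively (even when $j=1$, this is just $\Phi^1$ itself; when $j\ne 1$ it holds because the two edge sets are disjoint and each marginal is correct and they are independent), we get $\E[|A_j|] = \E_{\Phi\sim\cP}[|\Gamma(S,\Phi)|] = \sigma(S)$. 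Summing over $j\in[t]$ yields $\sigma^t(S) \le t\cdot\sigma(S)$.

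The step I expect to be the main obstacle — though it is conceptually routine — is verifying the containment $\Gamma(S,(\cup_i \Phi^i_S, \Phi^1_{V\setminus S})) \subseteq \bigcup_{j} A_j$ rigorously, i.e., decomposing an arbitrary reachability path in the composite graph so that it "charges" to exactly one copy $j$. One must be careful that a path might use several distinct out-of-$S$ edges if it leaves and re-enters $S$ multiple times; the resolution is that the very first edge on the path leaving $S$ already certifies $v\in A_j$ for the copy $j$ providing that edge, since the remainder of the path (after truncating at the first re-entry into $S$, if any) can be replaced by the shorter suffix that stays in $V\setminus S$, and in any case $v$ itself is reached. Once this path-surgery is stated cleanly, the rest is a one-line union bound plus the marginal-distribution identity, and the lemma follows.
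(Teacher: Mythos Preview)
Your approach is the same as the paper's: establish the pointwise containment
\[
\Gamma\bigl(S,(\textstyle\cup_i\phi^i_S,\phi^1_{V\setminus S})\bigr)\ \subseteq\ \bigcup_{j\in[t]} \Gamma\bigl(S,(\phi^j_S,\phi^1_{V\setminus S})\bigr),
\]
take expectations, and use that each $(\Phi^j_S,\Phi^1_{V\setminus S})$ has the same law as a single $\Phi\sim\cP$ to identify every summand with $\sigma(S)$.

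One correction to your path-surgery step: charging $v$ to the copy $j$ supplying the \emph{first} exit edge from $S$ does not work when the path re-enters $S$, because a later exit edge may live only in some $\phi^{j'}_S$ with $j'\ne j$, and that edge need not be present in $(\phi^j_S,\phi^1_{V\setminus S})$. Your ``truncate and replace by a shorter suffix'' sentence gestures at the fix but does not pin down which $j$ actually works. The clean argument (and the one the paper gives) is to take any path from $S$ to $v$ in the composite graph, let $u$ be the \emph{last} vertex of $S$ on it, and observe that the suffix from $u$ to $v$ uses exactly one $S$-exit edge (live in some $\phi^j_S$) followed only by edges out of $V\setminus S$ (hence live in $\phi^1_{V\setminus S}$); this suffix witnesses $v\in A_j$. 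With that adjustment your proof is complete and coincides with the paper's.
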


\begin{proof}[Proof of Theorem \ref{theorem:adaptive-gap-upper}]
It is enough to show that for every adaptive policy $\pi$, 
	$\fa(\pi) \leq 4\fa(\cW(\pi))$.
%
This is done by the following derivation sequence:
$\fa(\pi) \leq \bfa(\bp) = \sigma^3(\pi)
	= \sum_{i = 0}^{k-1}\E_{s \in \cP_{i}^{\pi}}\left[\Delta_{f^{3}}\left(\pi(\psi_s) \mid \psi_s\right)\right]
	\leq\sum_{i = 0}^{k-1}\E_{s \in \cP_{i}^{\pi}}\left[2\Delta_{f^2}\left(\pi(\psi_s) \mid \dom(\psi_s)\right)\right]
	=2\fa^2(\cW(\pi)) \leq 4\fa(\cW(\pi))$,
	where the first inequality is by Claim \ref{claim:hybrid-vs-adaptive}, the second
	and the third equalities are by Lemma~\ref{lem:marginal}, 
	the second inequality is by Lemma \ref{lemma:hybrid-vs-non-adaptive} and the
	last inequality is by Lemma~\ref{lemma:influence-spread-high-order}.
%
\end{proof}

\subsection{Lower bound}
\label{sec:adaptive-gap-lower-bound}
Next, we proceed to give a lower bound on the adaptivity gap for the influence maximization problem in the myopic feedback model. Our result is stated as follow:
\begin{restatable}{theorem}{thmAdaptiveGapLower}
	\label{theorem:adaptive-gap-lower}
	Under the IC model with myopic feedback, the adaptivity gap for the influence
	maximization problem is at least $e/(e - 1)$.
\end{restatable}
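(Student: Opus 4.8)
The plan is to exhibit a single family of influence graphs $\{G_m\}$ together with a budget $k$ on which the ratio $\OPT_A(G_m,k)/\OPT_N(G_m,k)$ approaches $e/(e-1)$ as $m\to\infty$. The natural construction is an "in-star"–like gadget in which adaptivity lets the policy avoid wasting seeds on already-activated targets. Concretely, I would take $k=1$ first as a warm-up and then, if $k=1$ is too weak, scale to general $k$: let there be a large pool of "source" nodes $u_1,\dots,u_N$ each of which independently activates a common "sink" block, plus enough private structure that a non-adaptive policy, committing to its seed(s) in advance, cannot react to whether the sink was hit. The point of the IC model with myopic feedback is that after one seed we see exactly which out-neighbors fired, so the adaptive policy can switch to a fresh, disjoint sub-gadget whenever its first try failed, whereas the non-adaptive optimum must fix a distribution over seed sets up front.

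The key steps, in order, are: (1) Describe the gadget precisely and set the edge probabilities so that a single seed succeeds (reaches the valuable part of the graph) with probability $p = 1/k$ or more generally $p$ chosen to optimize the final ratio; the classic $e/(e-1)$ constant appears from $(1-1/k)^k \to 1/e$, so I expect the gadget to have $k$ "parallel attempts" available to the adaptive policy and the non-adaptive policy to effectively get only one good shot per seed. (2) Compute $\OPT_A(G_m,k)$: the adaptive policy seeds one source, observes via myopic feedback whether the big component was reached, and if not moves to the next disjoint copy; its expected spread is (roughly) $1 - (1-p)^k$ times the value of the big component, plus lower-order terms. (3) Compute $\OPT_N(G_m,k)$: argue by a symmetry / convexity argument that the best non-adaptive choice of $k$ seeds, unable to condition on feedback, gains only about $k p$ fraction (linearly, not with the $1-(1-p)^k$ boost) of the big component — the formal step is to upper-bound $\sigma(S)$ for every $S$ with $|S|\le k$. (4) Take the ratio and let the number of copies $m$ (and hence $N$) tend to infinity so boundary/overlap effects vanish, and optimize over $p$ (equivalently over $k$) to push the ratio to $e/(e-1)$.

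The main obstacle I anticipate is step (3): rigorously bounding the optimal \emph{non-adaptive} influence spread. One has to show that no clever fixed seed set — possibly spreading its budget across many copies, or stacking seeds in one copy — beats the "linear" bound, and this requires using submodularity of $\sigma$ together with the specific independence structure of the gadget (so that putting two seeds into the same copy has sharply diminishing returns, while spreading them out gives only additive gains). A secondary technical nuisance is making sure the "valuable component" is large enough that the $+O(1)$ contributions from the source nodes themselves and from partial activations are negligible in the limit, so that the ratio is genuinely governed by $1-(1-1/k)^k$ over $k\cdot(1/k)=1$, i.e.\ by $1-1/e$ in the denominator's favor, giving $1/(1-1/e)=e/(e-1)$. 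I would also double-check that myopic feedback (not full-adoption feedback) suffices for the adaptive policy to get the full $1-(1-p)^k$ — it does, because "did my seed reach the big component" is visible already from the one-hop feedback if the source connects to the component by a single direct edge.
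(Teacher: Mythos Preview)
Your proposal contains a genuine conceptual error in steps (2) and (3): you have the roles of the adaptive and the non-adaptive policy reversed. You assert that the adaptive policy achieves a $1-(1-p)^k$ fraction of the big component while the non-adaptive policy achieves only the ``linear'' $kp$ fraction. But $1-(1-p)^k \le kp$ for all $p\in[0,1]$ and $k\ge 1$, so your claimed values give $\OPT_A/\OPT_N \le 1$, not $\ge e/(e-1)$. More concretely, in any gadget consisting of sources that each hit a single common sink with probability $p$, the non-adaptive policy that seeds $k$ sources already attains the $1-(1-p)^k$ success probability, and adaptivity cannot improve on it (once the sink is hit there is nothing else to spend leftover budget on). Likewise, if the copies are truly disjoint, the best non-adaptive allocation (one seed per copy) matches the adaptive value $kpV$ exactly. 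Neither variant of your gadget exhibits any adaptivity gap.

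The correct picture is the opposite one: the $e/(e-1)$ gap arises because the \emph{non-adaptive} policy is the one stuck with the $1-(1-p)^k\approx 1-1/e$ coupon-collector loss, while the \emph{adaptive} policy avoids it by steering each new seed toward still-uncovered targets. The paper's construction makes this precise with a bipartite graph $G=(L,R,E)$ where $|R|=m^3$, every node in $L$ has exactly $m^2$ out-edges into $R$ each live with probability $1/m$, and for every $m^2$-subset $X\subseteq R$ there is some $u_X\in L$ with $N(u_X)=X$. The budget is $k=m^2$. The adaptive policy, after each seed, sees (myopically) which of its $m^2$ neighbors were activated and then picks the next seed $u_X$ so that $X$ lies entirely in the still-unreached part of $R$; each seed thus contributes about $m$ \emph{fresh} activations, and after $m^2$ seeds essentially all of $R$ is covered, giving $\OPT_A \approx m^3$. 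Any non-adaptive seed set $S\subseteq L$ with $|S|\le m^2$ sends at most $m^4$ edges into $R$, so on average each target has at most $m$ incoming seed-edges and (by concavity of $x\mapsto 1-(1-1/m)^x$) is activated with probability at most $1-(1-1/m)^m\to 1-1/e$, yielding $\OPT_N \lesssim (1-1/e)m^3$. The ratio tends to $e/(e-1)$. The ingredient you are missing is a \emph{large shared pool of targets} together with enough sources (one for every possible neighborhood pattern) so that the adaptive policy can always find a seed whose entire neighborhood is fresh; disjoint copies or a single sink cannot produce this effect.
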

\begin{proof}[Proof Sketch]
We construct a bipartite graph $G = (L, R, E, p)$ with $|L| = \binom{m^3}{m^2}$ and $|R|=m^3$.
For each subset $X\subset R$ with $|X|=m^2$, there is exactly one node $u\in L$ that connects to all nodes
	in $X$.
We show that for any $\epsilon > 0$, there is a large enough $m$ such that 
	in the above graph with parameter $m$ the adaptivity gap is at least $e/(e - 1)-\epsilon$.
\end{proof}

\section{Adaptive and Non-Adaptive Greedy Algorithms}
\label{sec:constant-competitive-algo}

In this section, we consider two prevalent algorithms --- the \emph{greedy} algorithm and the \emph{adaptive greedy} algorithm for the influence maximization problem. 
To the best of our knowledge, we provide the first approximation ratio of these algorithms 
	with respect to the adaptive optimal solution in the IC model with myopic feedback.
%
%
We formally describe the algorithms in figure \ref{fig:description-greedy-adaptive-greedy}. 

\begin{figure}[!htbp]
	\begin{tabular}{|*2{>{\centering\arraybackslash}p{0.45\linewidth}|}}
		\hline
		\begin{algorithmic}
			\label{algo:greedy}
			\STATE {\bf Greedy Algorithm:}
			\STATE $S = \emptyset$
			\WHILE{$|S| < k$}
			\STATE$u = \argmax_{u \in V\backslash S} \Delta_{f}(u | S)$\\
			\STATE $S = S \cup \{u\}$\\
			\ENDWHILE
			\RETURN $S$
		\end{algorithmic}
		&
		\begin{algorithmic}
			\label{algo:adaptive-greedy}
			\STATE {\bf Adaptive Greedy Algorithm:}
			\STATE $S = \emptyset, \Psi = \emptyset$
			\WHILE{$|S| < k$}
			\STATE$u = \argmax_{u \in V\backslash S} \Delta_{f}(u | \Psi)$
			\STATE Select $u$ as seed and observe $\Phi(u)$.
			\STATE $S = S \cup \{u\}$, $\Psi = \Psi \cup \{(u, \Phi(u))\}$ 
			\ENDWHILE
		\end{algorithmic}\\
		\hline
	\end{tabular}
	\caption{Description for \emph{greedy} and \emph{adaptive greedy}.}
	\label{fig:description-greedy-adaptive-greedy}
\end{figure}

Our main result is summarized below.
\begin{theorem}
	\label{theorem:greedy&adaptive-greedy-constant}
	Both greedy and adaptive greedy are $\frac{1}{4}(1 - \frac{1}{e})$ approximate to the optimal adaptive policy under the IC model with myopic feedback.
\end{theorem}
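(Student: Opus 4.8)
The plan is to reduce the theorem to a single claim: each of the two algorithms attains influence spread at least $(1-\tfrac1e)\,\OPT_N(G,k)$, where $\OPT_N$ is the non-adaptive optimum. Combined with the adaptivity-gap bound $\OPT_A(G,k)\le 4\,\OPT_N(G,k)$ from Theorem~\ref{theorem:adaptive-gap-upper}, this immediately yields the stated $\tfrac14(1-\tfrac1e)$ ratio against $\OPT_A$. For the non-adaptive greedy algorithm the claim is exactly the classical guarantee of \citet{Kempe2003maximizing}, since $\sigma(\cdot)$ is monotone and submodular. So the only real work is to establish the same $(1-\tfrac1e)$ bound for \emph{adaptive} greedy, which cannot appeal to \citet{GoloKrause11} because the myopic feedback model is not adaptive submodular. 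The key observation enabling a direct argument is that we compare adaptive greedy against the \emph{non-adaptive} optimal seed set $S^*$ (a fixed set of size $k$), rather than against $\OPT_A$.

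For the adaptive-greedy part I would run the greedy potential argument inside the decision tree. Write $\pi^{ag}$ for the adaptive greedy policy, fix a random realization $\Phi$, and let $\psi_0\subseteq\psi_1\subseteq\cdots\subseteq\psi_k$ be the partial realizations it produces, with $S_i=\dom(\psi_i)$ and $u_{i+1}=\pi^{ag}(\psi_i)=\argmax_u\Delta_f(u\mid\psi_i)$. Define the potential
\[
Y_i \;=\; \E_{\Phi'\sim\psi_i}\left[\,f(S_i,\Phi')\,\right],
\]
the expected realized spread of the first $i$ seeds given the feedback seen so far; $Y_i$ is a random variable through $\psi_i$. Then $Y_0=0$ and, by the tower rule, $\E[Y_k]=\sigma(\pi^{ag})$. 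Since $u_{i+1}\notin S_i$, the status of its out-edges is independent of $\psi_i$, so a deferred-decisions computation gives $\E[Y_{i+1}-Y_i\mid\psi_i]=\E_{\Phi'\sim\psi_i}\left[f(S_i\cup\{u_{i+1}\},\Phi')-f(S_i,\Phi')\right]=\Delta_f(u_{i+1}\mid\psi_i)=\max_u\Delta_f(u\mid\psi_i)$. On the other side, submodularity of $f(\cdot,\phi)$ applied pointwise for each $\phi\sim\psi_i$ with the fixed set $S^*$ gives $f(S_i\cup S^*,\phi)-f(S_i,\phi)\le\sum_{v\in S^*}\left(f(S_i\cup\{v\},\phi)-f(S_i,\phi)\right)$; taking $\E_{\phi\sim\psi_i}$ of both sides and using $|S^*|\le k$ yields $\E_{\phi\sim\psi_i}\left[f(S_i\cup S^*,\phi)\right]-Y_i\le k\max_u\Delta_f(u\mid\psi_i)$.

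The delicate point — which I expect to be the main obstacle — is that one \emph{cannot} argue pointwise that $\E_{\phi\sim\psi_i}[f(S_i\cup S^*,\phi)]\ge\sigma(S^*)$: conditioning $\psi_i$ on blocked out-edges of already-selected seeds can genuinely decrease the conditional spread of $S^*$, so monotonicity must be invoked only \emph{after} averaging over $\psi_i$. Taking expectation over $\psi_i$ and using the tower rule, $\E_{\psi_i}\left[\E_{\phi\sim\psi_i}[f(S_i\cup S^*,\phi)]\right]=\E_{\Phi}\left[f(S_i\cup S^*,\Phi)\right]\ge\E_{\Phi}\left[f(S^*,\Phi)\right]=\OPT_N$, where $S_i$ now denotes the (random) set of the first $i$ seeds selected by $\pi^{ag}$ under $\Phi$ and the inequality is monotonicity of $f(\cdot,\Phi)$. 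Combining this with the increment identity, $\E[Y_{i+1}]-\E[Y_i]=\E_{\psi_i}\left[\max_u\Delta_f(u\mid\psi_i)\right]\ge\tfrac1k\left(\OPT_N-\E[Y_i]\right)$, hence $\OPT_N-\E[Y_{i+1}]\le(1-\tfrac1k)\left(\OPT_N-\E[Y_i]\right)$. Iterating from $Y_0=0$ gives $\sigma(\pi^{ag})=\E[Y_k]\ge\left(1-(1-\tfrac1k)^k\right)\OPT_N\ge(1-\tfrac1e)\OPT_N$. Finally, with Theorem~\ref{theorem:adaptive-gap-upper}, $\sigma(\pi^{ag})\ge(1-\tfrac1e)\OPT_N\ge\tfrac14(1-\tfrac1e)\OPT_A$, and the identical chain using $\sigma(S^g)\ge(1-\tfrac1e)\OPT_N$ for the non-adaptive greedy output $S^g$ completes the proof.
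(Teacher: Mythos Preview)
Your proposal is correct and follows essentially the same route as the paper: reduce both algorithms to a $(1-\tfrac1e)$ guarantee against $\OPT_N$ and then invoke the adaptivity-gap bound of Theorem~\ref{theorem:adaptive-gap-upper}; for adaptive greedy, the paper's Lemma~\ref{lemma:adaptive-greedy-vs-non-adaptive-optimal} runs exactly your potential argument (with $U(t)=\E[Y_t]$), using pointwise submodularity and monotonicity of $f(\cdot,\Phi)$ together with the tower rule to obtain the recursion $U(t{+}1)-U(t)\ge\tfrac1k(\sigma(S^*)-U(t))$. The ``delicate point'' you flag is handled identically there --- monotonicity is applied to $f(\cdot,\Phi)$ for each realization and the comparison to $\sigma(S^*)$ only emerges after the outer expectation via the law of total expectation.
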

\begin{proof}[Proof Sketch]
The proof for the non-adaptive greedy algorithm is straightforward since the greedy algorithm 
	provides a $(1 - \frac{1}{e})$ approximation to the non-adaptive optimal solution, which
	by Theorem \ref{theorem:adaptive-gap-upper} is at least $\frac{1}{4}$ of the adaptive
	optimal solution.
For the adaptive greedy algorithm, we need to separately prove that it also provides a
	$(1 - \frac{1}{e})$ approximation to the non-adaptive optimal solution, and then
	the result is immediate similar to the non-adaptive greedy algorithm.
\end{proof}

Theorem~\ref{theorem:greedy&adaptive-greedy-constant} shows that greedy and adaptive greedy
	can achieve at least an approximation ratio of $\frac{1}{4}(1 - \frac{1}{e})$ 
	with respect to the adaptive optimal solution.
We further show that their approximation ratio is at most $\frac{e^2 + 1}{(e + 1)^2}\approx 0.606$,
	which is strictly less than $1-1/e \approx 0.632$.
To do so, we first present an example for non-adaptive greedy with approximation ratio
at most $\frac{e^2 + 1}{(e + 1)^2}$.
Next, we show that myopic feedback does not help much to adaptive greedy, in that
	the approximation ratio for the non-adaptive greedy algorithm is no worse than adaptive greedy, when considering over all graphs. 
Combining with the first observation, we also achieve the result for the adaptive greedy algorithm.
\begin{theorem}
	\label{theorem:bad-example-greedy}
	The approximation ratio for greedy and adaptive greedy is no better than $\frac{e^2 + 1}{(e + 1)^2}\approx 0.606$, which is strictly less than $1-1/e \approx 0.632$.
	Moreover, the approximation ratio of adaptive greedy is at most that of the non-adaptive greedy,
	when considering all influence graphs.
%
\end{theorem}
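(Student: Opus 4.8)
The statement has two parts and I would establish them in order. For the numerical bound it suffices to exhibit one family of influence graphs on which \emph{non-adaptive} greedy has ratio at most $\frac{e^2+1}{(e+1)^2}$ against the adaptive optimum; for the comparison between the two algorithms I would prove a reduction showing that, over all influence graphs, the worst-case ratio of adaptive greedy is no larger than that of non-adaptive greedy, so the same constant transfers to adaptive greedy. Concretely the reduction maps an arbitrary instance $(G,k)$ to an instance $(G',k')$ with $\sigma(\text{adaptive greedy on }G')/\OPT_A(G',k')\le \sigma(\text{non-adaptive greedy on }G)/\OPT_A(G,k)$; since $(G,k)$ is arbitrary, letting it approach the worst case of non-adaptive greedy forces the worst-case ratio of adaptive greedy to be no larger, and together with the first part this proves both sentences of the theorem. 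It is worth noting that the target constant factors as $\frac{e^2+1}{(e+1)^2}=\left(\frac{e}{e+1}\right)^2+\left(\frac{1}{e+1}\right)^2$, which hints that the extremal instance is built from two coupled ``regimes''.

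\paragraph{The bad instance for non-adaptive greedy.}
I would build a parametrized influence graph $G_m$ (with budget $k$ possibly depending on $m$) that simultaneously plays two roles. As a \emph{greedy trap} it forces the non-adaptive greedy algorithm --- which at each step picks $\argmax_u \Delta_f(u\mid S)$ for the current seed set $S$ --- onto a sequence of ``decoy'' nodes whose union has strictly suboptimal influence spread; this is the probabilistic analogue of the classical $(1-\frac1e)$-tightness of greedy for submodular maximization, realized by bipartite gadgets whose marginal gains are controlled expressions of the form $1-(1-p)^{(\cdot)}$. As an \emph{adaptivity amplifier}, in the spirit of the lower-bound instance behind Theorem~\ref{theorem:adaptive-gap-lower}, it contains a probabilistic-coverage component in which an adaptive policy --- seeing via the myopic feedback exactly which out-neighbors of each selected seed were activated --- can redirect its later seeds onto the still-uncovered nodes and so beat every non-adaptive policy, making $\OPT_A(G_m,k)$ strictly exceed what greedy achieves. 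The two components share a common node pool, and the edge probabilities and gadget sizes are tuned so that as $m\to\infty$ the ratio $\sigma(\text{greedy})/\OPT_A$ tends to $\frac{e^2+1}{(e+1)^2}$. The analysis then splits into: (i) an upper bound on $\sigma(\text{greedy})$, obtained by pinning down the exact trajectory of non-adaptive greedy, which reduces to checking step by step the ordering of all candidate non-adaptive marginal gains $\Delta_f(\cdot\mid S)$ under the partially built seed set; and (ii) a lower bound on $\OPT_A$, obtained by writing down one explicit adaptive policy and computing its influence spread.

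\paragraph{The reduction, and the main obstacle.}
For the second part, given $(G,k)$ I would attach to $G$ a small ``bait'' gadget to form $G'$. The two greedy variants necessarily agree on their first seed, since each maximizes the unconditional marginal gain $\Delta_f(\cdot\mid\emptyset)=\sigma(\{\cdot\})$; but the bait is designed so that, \emph{after} the myopic feedback of this first seed is observed, the conditional criterion $\Delta_f(\cdot\mid\psi)$ that drives adaptive greedy steers it toward bait nodes which are, unconditionally, worse than the nodes non-adaptive greedy would take --- i.e.\ the feedback is made actively misleading rather than merely uninformative. The bait must contribute negligibly to the influence spread of any fixed policy (so that $\OPT_A(G',k')\ge\OPT_A(G,k)$, using that any policy for $G$ is still feasible in $G'$), whence the ratio can only worsen and the claimed inequality follows. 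I expect this point to be the heart of the theorem: one cannot merely ``switch off'' the feedback, because a graph on which feedback is useless also has no adaptivity gap, so $\OPT_A$ drops to $\OPT_N$ and the ratio springs back above $1-\frac1e$, defeating the purpose; the construction instead has to make feedback genuinely hurt adaptive greedy while leaving the adaptive optimum essentially unchanged, and producing a bait gadget that threads this needle --- fooling the $\Delta_f(\cdot\mid\psi)$ comparison while remaining invisible to $\OPT_A$ and to the first-step marginals --- is the delicate part. A secondary, routine obstacle is the bookkeeping in (i) above: certifying greedy's whole $k$-step path requires keeping the full ranking of candidates' marginal gains correct at every intermediate seed set.
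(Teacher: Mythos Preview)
Your outline for the first part (the explicit bad instance) is on the right track and matches the paper's approach in spirit: the paper builds a three-layer graph $V_1\to V_2\to V_3$ with carefully tuned probabilities so that non-adaptive greedy wastes its early budget on $V_1$ nodes, while an adaptive policy first seeds all of $V_2$, reads from the myopic feedback which $V_3$ nodes are still uncovered, and spends the remaining budget there. So this part is fine modulo the explicit construction.

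The genuine gap is in your reduction for the second part. You explicitly reject the idea of ``switching off'' the feedback on the grounds that a graph with useless feedback has no adaptivity gap, so $\OPT_A$ would collapse to $\OPT_N$ and the ratio would jump back above $1-1/e$. That reasoning is wrong, and it leads you away from the correct (and much simpler) construction. The key point you miss is that feedback can be made useless \emph{only for the nodes adaptive greedy is forced to select}, while remaining fully informative for the nodes the adaptive optimum selects. Concretely, the paper takes the given graph $G$, makes a weight-$w$ copy $G_2$ of it, and adds an edgeless copy $G_1$ of the node set, with each $G_1$ node pointing to its $G_2$ twin by a probability-$1$ edge. Adaptive greedy always prefers the $G_1$ node over its $G_2$ twin (its marginal is larger by exactly $1$), but the myopic feedback from a $G_1$ node reveals only the deterministic activation of its twin --- no information at all --- so adaptive greedy on $G(w)$ behaves exactly like non-adaptive greedy on $G$. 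Meanwhile the adaptive optimum is free to ignore $G_1$ entirely and play directly in $G_2$, where feedback is as useful as it was in $G$; hence $\OPT_A(G(w),k)\ge w\cdot\OPT_A(G,k)$ and the adaptivity gap is preserved. Letting $w\to\infty$ gives the desired inequality.

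Your alternative plan --- a bait gadget that makes feedback \emph{actively misleading} while contributing negligibly to any policy's spread --- is therefore unnecessary, and it is also not clear it can work: if the bait has negligible spread, it is hard to see how the conditional marginal $\Delta_f(\cdot\mid\psi)$ of a bait node could ever beat that of a genuine $G$-node after the first step, so adaptive greedy would not be steered there. The paper's trick sidesteps this entirely by decoupling ``nodes greedy picks'' from ``nodes where feedback matters''.
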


\section{Conclusion and Future Work}

In this paper, we systematically study the adaptive influence maximization problem with 
	myopic feedback under the independent cascade model, and provide constant upper and lower bounds
	on the adaptivity gap and the approximation ratios of the non-adaptive greedy and adaptive
	greedy algorithms.
There are a number of future directions to continue this line of research.
First, there is still a gap between the upper and lower bound results in this paper, and
	thus how to close this gap is the next challenge.
Second, our result suggests that adaptive greedy may not bring much benefit under the myopic feedback
	model, so are there other adaptive algorithms that could do much better?
Third, for the IC model with full-adoption feedback, because the feedback on different seed nodes
	may be correlated, existing adaptivity gap results in \cite{asadpour2015maximizing,bradac2019near}
	cannot be applied, and thus its adaptivity gap is still open 
	even though it is adaptive submodular.
One may also explore beyond the IC model, and study adaptive solutions for other models such as
	the linear threshold model, general threshold model etc.\cite{Kempe2003maximizing}.
Finally, scalable algorithms for adaptive influence maximization is also worth to investigate.

\section*{Acknowledgment}
We would like to thank Dimitris Kalimeris for some early discussions on the subject.
Wei Chen is partially supported by the National Natural Science Foundation of China (Grant No.
61433014).

\bibliographystyle{plainnat}
\bibliography{myopic_adaptive}

\newpage
\appendix
\section*{Appendix}
We include the missing proofs in this appendix. 
For convenience, we restate the lemmas and theorems that we prove here.

\section{Missing Proofs of Section~\ref{sec:adaptivityUpperBound}, Adaptivity Upper Bound}

{\lemmarginal*}
\begin{proof}
	%
	We first prove the equality on $\sigma^t(\pi)$.
	Let $V(\pi, \Phi)_{:i}$ (resp. $V(\pi, \Phi)_{i}$) denote the first $i$ nodes 
	(resp. the $i^{th}$ node) selected by policy $\pi$ under realization $\Phi$.
	
	Then we have
	\begin{align}
	&\sum_{i = 0}^{k-1}\E_{s \sim \cP_{i}^{\pi}}\left[\Delta_{f^{t}}\left(\pi(\psi_s) \mid \psi_s\right)\right] \notag\\
	=&\sum_{i = 0}^{k-1}\E_{s \sim \cP_{i}^{\pi}}\left[\E_{\Phi^1, \cdots, \Phi^t \sim \cP}\left[f^t\left(\dom(\psi_s) \cup \pi(\psi_s), \Phi^1, \cdots, \Phi^{t}\right) - f^t\left(\dom(\psi_s), \Phi^1, \cdots, \Phi^{t}\right) \mid \Phi^1 \sim \psi_s\right]\right] \notag\\
	=&\sum_{i = 0}^{k-1}\E_{\Phi^2, \cdots, \Phi^t \sim \cP}\left[\E_{s \sim \cP_{i}^{\pi}}\left[\E_{\Phi^1 \sim \cP}\left[f^t\left(\dom(\psi_s) \cup \pi(\psi_s), \Phi^1, \cdots, \Phi^{t}\right) - f^t\left(\dom(\psi_s), \Phi^1, \cdots, \Phi^{t}\right) \mid \Phi^1 \sim \psi_s\right]\right]\right] \notag\\
	=&\sum_{i = 0}^{k-1}\E_{\Phi^2, \cdots, \Phi^t \sim \cP}\left[\E_{\Phi^1 \sim \cP}\left[\left(f^t\left(V(\pi, \Phi^1)_{:i} \cup V(\pi, \Phi^1)_{i + 1}, \Phi^1, \cdots, \Phi^{t}\right) - f^t\left(V(\pi, \Phi^1)_{:i}, \Phi^1, \cdots, \Phi^{t}\right)\right) \right] \right]\notag\\
	=&\E_{\Phi^2, \cdots, \Phi^t \sim \cP}\left[\E_{\Phi^1 \sim \cP}\left[\sum_{i = 0}^{k-1}\left(f^t\left(V(\pi, \Phi^1)_{:i} \cup V(\pi, \Phi^1)_{i + 1}, \Phi^1, \cdots, \Phi^{t}\right) - f^t\left(V(\pi, \Phi^1)_{:i}, \Phi^1, \cdots, \Phi^{t}\right)\right) \right] \right]\notag\\
	=& \E_{\Phi^2, \cdots, \Phi^t \sim \cP}\left[\E_{\Phi^1\sim \cP}\left[f^{t}(V(\pi, \Phi^1),\Phi^1, \cdots, \Phi^t)\right]\right]\notag\\
	=&\sigma^{t}(\pi) \nonumber 
	\end{align}
	The third equality above is by the law of total expectation, and notice that
	for any tree node $s$ in $\T(\pi)$ and 
	any random realization $\Phi \sim \psi_s$, we have $V(\pi, \Phi)_{:i} = \dom(\psi_s)$ and $V(\pi, \Phi)_{i + 1} = \pi(\psi_s)$.
	
	
	Next, we prove the equality on $\sigma^t(\cW(\pi))$.
	\begin{align}
	&\sum_{i = 0}^{k-1}\E_{s \sim \cP_{i}^{\pi}}\left[\Delta_{f^{t}}\left(\pi(\psi_s) \mid \dom(\psi_s)\right)\right] \notag\\
	&=\sum_{i = 0}^{k-1}\E_{s \sim \cP_{i}^{\pi}}\left[\E_{\Phi^1, \cdots, \Phi^t \sim \cP}\left[f^t\left(\dom(\psi_s) \cup \pi(\psi_s), \Phi^1, \cdots, \Phi^{t}\right) - f^t\left(\dom(\psi_s), \Phi^1, \cdots, \Phi^{t}\right)\right]\right] \notag\\
	&=\sum_{i = 0}^{k-1}\E_{\Phi^1, \cdots, \Phi^t \sim \cP}\left[\E_{s \sim \cP_{i}^{\pi}}\left[f^t\left(\dom(\psi_s) \cup \pi(\psi_s), \Phi^1, \cdots, \Phi^{t}\right) - f^t\left(\dom(\psi_s), \Phi^1, \cdots, \Phi^{t}\right) \right]\right]\notag\\
	&=\sum_{i = 0}^{k-1}\E_{\Phi^1, \cdots, \Phi^t \sim \cP}\left[\E_{\Phi\sim\cP}\left[f^t\left(V(\pi, \Phi)_{:i} \cup V(\pi, \Phi)_{i + 1}, \Phi^1, \cdots, \Phi^{t}\right) - f^t\left(V(\pi, \Phi)_{:i}, \Phi^1, \cdots, \Phi^{t}\right) \right]\right]\notag\\
	&=\E_{\Phi\sim\cP}\left[\E_{\Phi^1, \cdots, \Phi^t \sim \cP}\left[\sum_{i = 0}^{k-1}\left(f^t\left(V(\pi, \Phi)_{:i} \cup V(\pi, \Phi)_{i + 1}, \Phi^1, \cdots, \Phi^{t}\right) - f^t\left(V(\pi, \Phi)_{:i}, \Phi^1, \cdots, \Phi^{t}\right)\right) \right]\right]\notag\\
	&= \E_{\Phi \sim \cP}\left[\E_{\Phi^1, \cdots, \Phi^t \sim \cP}\left[f^{t}(V(\pi, \Phi),\Phi^1, \cdots, \Phi^t)\right]\right]\notag\\
	&= \E_{\Phi \sim \cP}\left[\sigma^t(V(\pi, \Phi)) \right]
	\notag\\
	&=\sigma^{t}(\cW(\pi)).\notag
	\end{align}
	The third equality above is because the distribution of $\dom(\psi_s)$ and $\pi(\psi_s)$
	with $s \sim \cP_{i}^{\pi}$ is exactly the same as the distribution of 
	$V(\pi, \Phi)_{:i}$ and $V(\pi, \Phi)_{i+1} $ with $\Phi \sim \cP$.
	Note that this $\Phi$ is independent of $\Phi^1, \cdots, \Phi^t$.
	The last equality is because the distribution of $V(\pi, \Phi)$ with $\Phi \sim \cP$ is exactly the
	distribution of the seed sets taken from the leaves of $\T(\pi)$, which exactly
	corresponds to the random-walk non-adaptive policy $\cW(\pi)$.
	%
\end{proof}

{\lemAdaptiveGapMyopicOne*}
\begin{proof}
	We first prove Inequality \eqref{eq:adaptive-gap-upper5}.
	To do so, we first expand the RHS of Eq.~\eqref{eq:adaptive-gap-upper5},
	\begin{align}
	&\Delta_{f^2}(u \mid S) = \E_{\Phi^2, \Phi^3 \sim \cP}\left[f^2\left(\Sps, \Phi^2, \Phi^{3}\right) - f^2\left(S, \Phi^2, \Phi^{3}\right)\right]\notag\\
	& = \E_{\Phi^2, \Phi^3 \sim \cP}\left[f\left(\Sps, (\Phi^2_{S}\cup \Phi^{3}_{S},\Phi^2_{u}\cup \Phi^{3}_{u}, \Phi^{2}_{V \backslash \Sps})  \right) - f\left(S, (\Phi^2_{S}\cup \Phi^{3}_{S},\Phi^2_{u}, \Phi^{2}_{V \backslash \Sps})\right)\right]\notag\\
	& = \E_{\Phi^{2}_{S}, \Phi^{3}_{S} \sim \cP_{S}}\left[\E_{\Phi^{2}_{u}, \Phi^{3}_{u} \in \cP_{u} }\left[\E_{\Phi^{2}_{V\backslash \Sps} \sim \cP_{V \backslash \Sps}}\left[f\left(\Sps, (\Phi^2_{S}\cup \Phi^{3}_{S},\Phi^2_{u}\cup \Phi^{3}_{u}, \Phi^{2}_{V \backslash \Sps})  \right) - 
	\right. \right. \right.	\nonumber \\
	& \qquad \qquad  \left. \left. \left. f\left(S, (\Phi^2_{S}\cup \Phi^{3}_{S},\Phi^2_{u}, \Phi^{2}_{V \backslash \Sps})\right)\right]\right]\right].\label{eq:adaptive-gap-upper1}
	\end{align}
	The third equality above holds because $\Phi^{2}_{S}, \Phi^{3}_{S}, \Phi^{2}_{u}, \Phi^{3}_{u}, \Phi^{2}_{V \backslash \Sps}, \Phi^{3}_{V \backslash \Sps}$ are mutually independent, and
	$\Phi^{3}_{V \backslash \Sps}$ does not appear inside the expectation term.
	Next, we expand the LHS of Eq.~\eqref{eq:adaptive-gap-upper5},
	\begin{align}
	&\mbox{LHS of Eq.~\eqref{eq:adaptive-gap-upper5}} \nonumber \\
	&=\E_{\Phi^{1}_S, \Phi^2_{S}, \Phi^{3}_S \sim \cP_S}\left[\E_{\Phi^{1}_{u}, \Phi^{2}_{u} \in \cP_{u} }\left[\E_{\Phi^{1}_{V\backslash \Sps} \sim \cP_{V \backslash \Sps}}\left[f\left(\Sps,(\Phi^{1}_{S} \cup \Phi^{2}_{S} \cup \Phi^{3}_{S}, \Phi^{1}_{u} \cup \Phi^{2}_{u}, \Phi^{1}_{V \backslash \Sps}) \right) \right.\right.\right.\nonumber\\
	&\qquad - \left.\left.\left.f\left(S, (\Phi^{1}_{S} \cup \Phi^{2}_{S} \cup \Phi^{3}_{S}, \Phi^{1}_{u}, \Phi^{1}_{V \backslash \Sps}) \right) \mid \Phi^1 \sim \psi^1\right]\right]\right]\nonumber\\
	&=\E_{\Phi^2_{S}, \Phi^{3}_S \sim \cP_S}\left[\E_{\Phi^{1}_{u}, \Phi^{2}_{u} \in \cP_{u} }\left[\E_{\Phi^{1}_{V\backslash \Sps} \sim \cP_{V \backslash \Sps}}\left[f\left(\Sps,(\psi^1 \cup \Phi^{2}_{S} \cup \Phi^{3}_{S}, \Phi^{1}_{u} \cup \Phi^{2}_{u}, \Phi^{1}_{V \backslash \Sps}) \right) \right.\right.\right.\nonumber\\
	&\qquad - \left.\left.\left.f\left(S, (\psi^1 \cup \Phi^{2}_{S} \cup \Phi^{3}_{S}, \Phi^{1}_{u}, \Phi^{1}_{V \backslash \Sps}) \right)\right]\right]\right].\nonumber\\
	&=\E_{\Phi^2_{S}, \Phi^{3}_S \sim \cP_S}\left[\E_{\Phi^{1}_{u}, \Phi^{3}_{u} \in \cP_{u} }\left[\E_{\Phi^{2}_{V\backslash \Sps} \sim \cP_{V \backslash \Sps}}\left[f\left(\Sps,(\psi^1 \cup \Phi^{2}_{S} \cup \Phi^{3}_{S}, \Phi^{1}_{u} \cup \Phi^{3}_{u}, \Phi^{2}_{V \backslash \Sps}) \right) \right.\right.\right.\nonumber\\
	&\qquad - \left.\left.\left.f\left(S, (\psi^1 \cup \Phi^{2}_{S} \cup \Phi^{3}_{S}, \Phi^{1}_{u}, \Phi^{2}_{V \backslash \Sps}) \right)\right]\right]\right].\notag\\
	&=\E_{\Phi^2_{S}, \Phi^{3}_S \sim \cP_S}\left[\E_{\Phi^{2}_{u}, \Phi^{3}_{u} \in \cP_{u} }\left[\E_{\Phi^{2}_{V\backslash \Sps} \sim \cP_{V \backslash \Sps}}\left[f\left(\Sps,(\psi^1 \cup \Phi^{2}_{S} \cup \Phi^{3}_{S}, \Phi^{2}_{u} \cup \Phi^{3}_{u}, \Phi^{2}_{V \backslash \Sps}) \right) \right.\right.\right.\nonumber\\
	&\qquad - \left.\left.\left.f\left(S, (\psi^1 \cup \Phi^{2}_{S} \cup \Phi^{3}_{S}, \Phi^{2}_{u}, \Phi^{2}_{V \backslash \Sps}) \right)\right]\right]\right].\label{eq:adaptive-gap-upper9}
	\end{align}
	The first equality above holds because all these random variables are independent.
	The second equality above holds because $\Phi^{1}_{S} = \psi^1$ implied by $\Phi^1 \sim \psi^1$. 
	In the third equality, we replace $\Phi^{1}_{V \backslash \Sps}$ with $\Phi^{2}_{V \backslash \Sps}$ and replace $\Phi^{2}_{u}$ with $\Phi^{3}_{u}$, 
	because they follow the same probability distributions and are independent to the other
	distributions. In the last equality, we replace $\Phi^{1}_{u}$ with $\Phi^{2}_{u}$.

	%
	Comparing Eq.~\eqref{eq:adaptive-gap-upper1} and Eq.~\eqref{eq:adaptive-gap-upper9}, we know that it suffices to prove that for any fixed partial realizations $\phi^{2}_{S}, \phi^{3}_{S}, \phi^{2}_{u}, \phi^{3}_{u}, \phi^{2}_{V \backslash S}$,
	\begin{align}
	&f\left(\Sps,(\psi^1 \cup \phi^{2}_{S} \cup \phi^{3}_{S}, \phi^{2}_{u} \cup \phi^{3}_{u}, \phi^{2}_{V \backslash \Sps})\right) - f\left(S, (\psi^1 \cup \phi^{2}_{S} \cup \phi^{3}_{S}, \phi^{2}_{u}, \phi^{2}_{V \backslash \Sps}) \right)\nonumber\\
	& \leq f\left(\Sps,(\phi^{2}_{S} \cup \phi^{3}_{S}, \phi^{2}_{u} \cup \phi^{3}_{u}, \phi^{2}_{V \backslash \Sps})\right) - f\left(S, (\phi^{2}_{S} \cup \phi^{3}_{S}, \phi^{2}_{u}, \phi^{2}_{V \backslash \Sps}) \right).\label{eq:adaptive-gap-upper8}
	\end{align}
	Consider any node $v \in \Gamma(\Sps,(\psi^1 \cup \phi^{2}_{S} \cup \phi^{3}_{S}, \phi^{2}_{u} \cup \phi^{3}_{u}, \phi^{2}_{V \backslash \Sps})) \backslash \Gamma(S, (\psi^1 \cup \phi^{2}_{S} \cup \phi^{3}_{S}, \phi^{2}_{u}, \phi^{2}_{V \backslash \Sps}))$, we have the following observations: 
	(1) under the realization $(\psi^1 \cup \phi^{2}_{S} \cup \phi^{3}_{S}, \phi^{2}_{u}, \phi^{2}_{V \backslash \Sps})$ (or equivalently its live-edge graph), node $v$ cannot be reached from nodes in $S$;  and
	(2) under the realization $(\psi^1 \cup \phi^{2}_{S} \cup \phi^{3}_{S}, \phi^{2}_{u}\cup\phi^{3}_{u}, \phi^{2}_{V \backslash \Sps})$ (or equivalently its live-edge graph), node $v$ can be reached via a path $P$ originated from node $u$, and $P$ does not contain any node in $S$. 
	
	Now, we are going to prove that $v \in \Gamma(\Sps,(\phi^{2}_{S} \cup \phi^{3}_{S}, \phi^{2}_{u} \cup \phi^{3}_{u}, \phi^{2}_{V \backslash \Sps})) \backslash \Gamma(S, (\phi^{2}_{S} \cup \phi^{3}_{S}, \phi^{2}_{u}, \phi^{2}_{V \backslash \Sps}) )$. Since the path $P$ does not contain any node in $S$, we know that path $P$ also exists under the realization $(\phi^{2}_{S} \cup \phi^{3}_{S}, \phi^{2}_{u} \cup \phi^{3}_{u}, \phi^{2}_{V \backslash \Sps})$, i.e., node $v$ can be reached from node $u$ under realization $(\phi^{2}_{S} \cup \phi^{3}_{S}, \phi^{2}_{u} \cup \phi^{3}_{u}, \phi^{2}_{V \backslash \Sps})$. Moreover, we know that the realization $((\phi^{2}_{S} \cup \phi^{3}_{S}, \phi^{2}_{u}, \phi^{2}_{V \backslash \Sps})$ has less live edges than the realization $(\psi^1 \cup \phi^{2}_{S} \cup \phi^{3}_{S}, \phi^{2}_{u}, \phi^{2}_{V \backslash \Sps})$, so node $v$ can not be reached from set $S$ under the realization $(\phi^{2}_{S} \cup \phi^{3}_{S}, \phi^{2}_{u}, \phi^{2}_{V \backslash \Sps})$ .
	As a result, we have proved
	\begin{align}
	&\Gamma\left(\Sps,(\psi^1 \cup \phi^{2}_{S} \cup \phi^{3}_{S}, \phi^{2}_{u} \cup \phi^{3}_{u}, \phi^{2}_{V \backslash \Sps})\right) \backslash \Gamma\left(S, (\psi^1 \cup \phi^{2}_{S} \cup \phi^{3}_{S}, \phi^{2}_{u}, \phi^{2}_{V \backslash \Sps})\right)\notag\\
	& \subseteq \Gamma\left(\Sps,(\phi^{2}_{S} \cup \phi^{3}_{S}, \phi^{2}_{u} \cup \phi^{3}_{u}, \phi^{2}_{V \backslash \Sps})\right) \backslash \Gamma\left(S, (\phi^{2}_{S} \cup \phi^{3}_{S}, \phi^{2}_{u}, \phi^{2}_{V \backslash \Sps}) \right).	\label{eq:adaptive-gap-upper10}
	\end{align}
	This proves Eq.~\eqref{eq:adaptive-gap-upper8} and thus concludes the proof
		of Inequality~\eqref{eq:adaptive-gap-upper5}.
	Note that the above proof on Eq.~\eqref{eq:adaptive-gap-upper8} resembles the proof of
	submodularity of influence utility function $f$ on a live-edge graph, but 
	Eq.~\eqref{eq:adaptive-gap-upper8} is a bit more complicated because it is on different
	live-edge graphs.

	Next we prove the Inequality \eqref{eq:adaptive-gap-upper6}.
	Again, we first expand the RHS of Eq.~\eqref{eq:adaptive-gap-upper6}.
	\begin{align}
	&\Delta_{f^2}(u \mid S) 
	= \E_{\Phi^{2}_{S}, \Phi^{3}_{S} \sim \cP_{S}}\left[\E_{\Phi^{2}_{u}, \Phi^{3}_{u} \in \cP_{u} }\left[\E_{\Phi^{2}_{V\backslash \Sps} \sim \cP_{V \backslash \Sps}}\left[f\left(\Sps, (\Phi^2_{S}\cup \Phi^{3}_{S},\Phi^2_{u}\cup \Phi^{3}_{u}, \Phi^{2}_{V\setminus \Sps})  \right) \right.\right.\right.
	\nonumber \\
	&\qquad \left.\left.\left. - f\left(S, (\Phi^2_{S}\cup \Phi^{3}_{S},\Phi^2_{u}, \Phi^{2}_{V \backslash \Sps})\right)\right]\right]\right]\notag\\
	&\geq\E_{\Phi^{2}_{S}, \Phi^{3}_{S} \sim \cP_{S}}\left[\E_{\Phi^{2}_{u}, \Phi^{3}_{u} \in \cP_{u} }\left[\E_{\Phi^{2}_{V\backslash \Sps} \sim \cP_{V \backslash \Sps}}\left[f\left(\Sps, (\Phi^2_{S}\cup \Phi^{3}_{S},\Phi^2_{u}\cup \Phi^{3}_{u}, \Phi^{2}_{V\setminus \Sps})  \right) \right.\right.\right.
	\nonumber \\
	& \qquad \left.\left.\left. - f\left(\Sps, (\Phi^2_{S}\cup \Phi^{3}_{S},\Phi^2_{u}, \Phi^{2}_{V \backslash \Sps})\right)\right]\right]\right].
	\label{eq:adaptive-gap-upperRHS2}
	\end{align}
	The inequality above is by the monotonicity of $f(S,\phi)$ on $S$. 
	Next, we expand the LHS of Eq.~\eqref{eq:adaptive-gap-upper6}.
	\begin{align}
	&\mbox{LHS of Eq.~\eqref{eq:adaptive-gap-upper6}} \nonumber\\
	&=\E_{\Phi^{1}_S, \Phi^2_{S}, \Phi^{3}_S \sim \cP_S}\left[\E_{\Phi^{1}_{u}, \Phi^{2}_{u}, \Phi^{3}_{u} \in \cP_{u} }\left[\E_{\Phi^{1}_{V\backslash \Sps} \sim \cP_{V \backslash \Sps}}\left[f\left(\Sps,(\Phi^{1}_{S} \cup \Phi^{2}_{S} \cup \Phi^{3}_{S}, \Phi^{1}_{u} \cup \Phi^{2}_{u} \cup \Phi^{3}_{u}, \Phi^{1}_{V \backslash \Sps}) \right) \right.\right.\right.\nonumber\\
	&\qquad - \left.\left.\left.f\left(\Sps, (\Phi^{1}_{S} \cup \Phi^{2}_{S} \cup \Phi^{3}_{S}, \Phi^{1}_{u} \cup \Phi^{2}_{u}, \Phi^{1}_{V \backslash \Sps}) \right) \mid \Phi^1 \sim \psi^1\right]\right]\right]\nonumber\\
	&=\E_{\Phi^2_{S}, \Phi^{3}_S \sim \cP_S}\left[\E_{\Phi^{1}_{u}, \Phi^{2}_{u}, \Phi^{3}_{u} \in \cP_{u} }\left[\E_{\Phi^{1}_{V\backslash \Sps} \sim \cP_{V \backslash \Sps}}\left[f\left(\Sps,(\psi^1 \cup \Phi^{2}_{S} \cup \Phi^{3}_{S}, \Phi^{1}_{u} \cup \Phi^{2}_{u} \cup \Phi^{3}_{u}, \Phi^{1}_{V \backslash \Sps}) \right) \right.\right.\right.\nonumber\\
	&\qquad - \left.\left.\left.f\left(\Sps, (\psi^1 \cup \Phi^{2}_{S} \cup \Phi^{3}_{S}, \Phi^{1}_{u} \cup \Phi^{2}_{u}, \Phi^{1}_{V \backslash \Sps}) \right)\right]\right]\right].\nonumber\\
	&=\E_{\Phi^2_{S}, \Phi^{3}_S \sim \cP_S}\left[\E_{\Phi^{1}_{u}, \Phi^{2}_{u}, \Phi^{3}_{u} \in \cP_{u} }\left[\E_{\Phi^{2}_{V\backslash \Sps} \sim \cP_{V \backslash \Sps}}\left[f\left(\Sps,(\psi^1 \cup \Phi^{2}_{S} \cup \Phi^{3}_{S}, \Phi^{1}_{u} \cup \Phi^{2}_{u} \cup \Phi^{3}_{u}, \Phi^{2}_{V \backslash \Sps}) \right) \right.\right.\right.\nonumber\\
	& \qquad - \left.\left.\left.f\left(\Sps, (\psi^1 \cup \Phi^{2}_{S} \cup \Phi^{3}_{S}, \Phi^{1}_{u} \cup \Phi^{2}_{u}, \Phi^{2}_{V \backslash \Sps}) \right)\right]\right]\right].\label{eq:adaptive-gap-upper11}
	\end{align}

	The last  equality holds by replacing $\Phi^{1}_{V\backslash \Sps}$ with $\Phi^{2}_{V\backslash \Sps}$, because both have the same distributions and are independent from the other distributions. 
	%
	%
	%
	Similar to the previous lemma, comparing Eq.~\eqref{eq:adaptive-gap-upperRHS2} and Eq.~\eqref{eq:adaptive-gap-upper11}, it suffices to prove that for fixed partial realizations $\phi^2_{S}, \phi^{3}_S, \phi^{1}_{u}, \phi^{2}_{u}, \phi^{3}_{u}$ and $\phi^{2}_{V \backslash \Sps}$, 
	\begin{align}
	&f\left(\Sps,(\psi^1 \cup \phi^{2}_{S} \cup \phi^{3}_{S}, \phi^{1}_{u} \cup \phi^{2}_{u} \cup \phi^{3}_{u}, \phi^{2}_{V \backslash \Sps}) \right) - f\left(\Sps, (\psi^1 \cup \phi^{2}_{S} \cup \phi^{3}_{S}, \phi^{1}_{u} \cup \phi^{2}_{u}, \phi^{2}_{V \backslash \Sps}) \right)\notag\\
	\leq &f\left(\Sps, (\phi^2_{S}\cup \phi^{3}_{S},\phi^2_{u}\cup \phi^{3}_{u}, \phi^{2}_{V \backslash \Sps})  \right) - f\left(\Sps, (\phi^2_{S}\cup \phi^{3}_{S},\phi^2_{u}, \phi^{2}_{V \backslash \Sps})\right) .\label{eq:adaptive-gap-upper12}
	\end{align}
	Consider any node $v \in \Gamma(\Sps,(\psi^1 \cup \phi^{2}_{S} \cup \phi^{3}_{S}, \phi^{1}_{u} \cup \phi^{2}_{u} \cup \phi^{3}_{u}, \phi^{2}_{V \backslash \Sps}) ) \backslash \Gamma(\Sps, (\psi^1 \cup \phi^{2}_{S} \cup \phi^{3}_{S}, \phi^{1}_{u} \cup \phi^{2}_{u}, \phi^{2}_{V \backslash \Sps}) )$, we have the following observations: 
	(1) Node $v$ cannot be reached from any node in set $\Sps$
	under the realization $(\psi^1 \cup \phi^{2}_{S} \cup \phi^{3}_{S}, \phi^{1}_{u} \cup \phi^{2}_{u}, \phi^{2}_{V \backslash \Sps})$; and
	(2) node $v$ can be reached via a simple path $P$ originated from node $u$
	under the realization $(\psi^1 \cup \phi^{2}_{S} \cup \phi^{3}_{S}, \phi^{1}_{u} \cup \phi^{2}_{u} \cup \phi^{3}_{u}, \phi^{2}_{V \backslash \Sps})$, and $P$ does not contain any node in $S$
	and any edge in $\phi^{1}_{u} \cup \phi^{2}_{u}$.

	Now, we prove that $v \in \Gamma(\Sps, (\phi^2_{S}\cup \phi^{3}_{S},\phi^2_{u}\cup \phi^{3}_{u}, \phi^{2}_{V \backslash \Sps})  ) \backslash \Gamma(\Sps, (\phi^2_{S}\cup \phi^{3}_{S},\phi^2_{u}, \phi^{2}_{V \backslash \Sps}))$.  
	Since path $P$ does not contain any node in $S$ and any edge in $\phi^{1}_{u}$, we know that path $P$ also exists under realization $ (\phi^2_{S}\cup \phi^{3}_{S},\phi^2_{u}\cup \phi^{3}_{u}, \phi^{2}_{V \backslash \Sps}) $, i.e., node $v$ can be reached from node $u$ under realization $ (\phi^2_{S}\cup \phi^{3}_{S},\phi^2_{u}\cup \phi^{3}_{u}, \phi^{2}_{V \backslash \Sps}) $. 
	Moreover, we know that the realization $(\phi^2_{S}\cup \phi^{3}_{S},\phi^2_{u}, \phi^{2}_{V \backslash \Sps})$
	has less live edges than the realization $(\psi^1 \cup \phi^{2}_{S} \cup \phi^{3}_{S}, \phi^{1}_{u} \cup \phi^{2}_{u}, \phi^{2}_{V \backslash \Sps})$, 
	thus node $v$ cannot be reached from the set $\Sps$ under realization $(\phi^2_{S}\cup \phi^{3}_{S},\phi^2_{u}, \phi^{2}_{V \backslash \Sps})$. Thus we can conclude that $v \in \Gamma(\Sps, (\phi^2_{S}\cup \phi^{3}_{S},\phi^2_{u}\cup \phi^{3}_{u}, \phi^{2}_{V \backslash \Sps})  ) \backslash \Gamma(\Sps, (\phi^2_{S}\cup \phi^{3}_{S},\phi^2_{u}, \phi^{2}_{V \backslash \Sps}))$, this leads to Eq.~\eqref{eq:adaptive-gap-upper12} and concludes the proof
	of Inequality \eqref{eq:adaptive-gap-upper6}.
\end{proof}

{\lemHybridvsNonAdaptive*}
\begin{proof}
	Again, for ease of notation, we set $S = \dom(\psi^1)$ and $\Sps = \dom(\psi^1) \cup \{u\}$, then we have
	\begin{align}
	&\Delta_{f^3}(u \mid \psi^1) = 
	\E_{\Phi^1, \Phi^2, \Phi^3 \sim \cP}\left[f\left(\Sps,(\Phi^{1}_{S} \cup \Phi^{2}_{S} \cup \Phi^{3}_{S}, \Phi^{1}_{u} \cup \Phi^{2}_{u} \cup \Phi^{3}_{u}, \Phi^{1}_{V \backslash \Sps}) \right)
	\right. \nonumber \\
	& \qquad \left. - f\left(S, (\Phi^{1}_{S} \cup \Phi^{2}_{S} \cup \Phi^{3}_{S}, \Phi^{1}_{V \backslash S}) \right) \mid \Phi^1 \sim \psi^1\right]\nonumber\\
	& =\E_{\Phi^1, \Phi^2, \Phi^3 \sim \cP}\left[f\left(\Sps,(\Phi^{1}_{S} \cup \Phi^{2}_{S} \cup \Phi^{3}_{S}, \Phi^{1}_{u} \cup \Phi^{2}_{u} \cup \Phi^{3}_{u} , \Phi^{1}_{V \backslash \Sps}) \right)\right.\nonumber\\
	&\qquad - \left.f\left(\Sps, (\Phi^{1}_{S} \cup \Phi^{2}_{S} \cup \Phi^{3}_{S}, \Phi^{1}_{u} \cup \Phi^{2}_{u}, \Phi^{1}_{V \backslash \Sps}) \right) \mid \Phi^1 \sim \psi^1\right]\nonumber\\
	&+\E_{\Phi^1, \Phi^2, \Phi^3 \sim \cP}\left[f\left(\Sps,(\Phi^{1}_{S} \cup \Phi^{2}_{S} \cup \Phi^{3}_{S}, \Phi^{1}_{u} \cup \Phi^{2}_{u}, \Phi^{1}_{V \backslash \Sps}) \right) - f\left(S, (\Phi^{1}_{S} \cup \Phi^{2}_{S} \cup \Phi^{3}_{S}, \Phi^{1}_{V \backslash S}) \right) \mid \Phi^1 \sim \psi^1\right]\nonumber\\	
	&\leq \Delta_{f^2}(u | S) + \Delta_{f^2}(u | S) 
	= 2\Delta_{f^2}(u | \dom(\psi^1)).\label{eq:adaptive-gap-upper7}
	\end{align}
	The inequality above is a direct consequence of Lemmas \ref{lemma:adaptive-gap-myopic1}.
\end{proof}

{\leminfluenceSpreadHighOrder*}
\begin{proof}
	We have
	\begin{align}
	\sigma^{t}(S) &= \E_{\Phi^1, \cdots, \Phi^t \sim \cP}\left[f^{t}(S,\Phi^1, \cdots, \Phi^t)\right]
	=\E_{\Phi^1, \cdots, \Phi^t \sim \cP}\left[f\left(S, (\cup_{i \in [t]} \Phi^{i}_{S}, \Phi^{1}_{V\backslash S})\right)\right]\notag\\
	&=\E_{\Phi^{1}_{V\backslash S} \sim \cP_{V\backslash S}}\left[\E_{\Phi^{1}_{S}, \cdots, \Phi^{t}_{S} \sim \cP_{S}}\left[f\left(S, (\cup_{i \in [t]} \Phi^{i}_{S}, \Phi^{1}_{V\backslash S})\right)\right]\right]. \label{eq:adaptive-gap-upper20}
	\end{align}
	We want to show that for any fixed $\phi^1_{V\setminus S}$, 
	\begin{align}
	& \E_{\Phi^{1}_{S}, \cdots, \Phi^{t}_{S} \sim \cP_{S}}\left[f\left(S, (\cup_{i \in [t]} \Phi^{i}_{S}, \phi^{1}_{V\backslash S})\right)\right]
	\leq \sum_{i \in [t]}\E_{\Phi^{i}_{S}}\left[f\left(S, (\Phi^{i}_{S}, \phi^{1}_{V\backslash S})\right)\right]\label{eq:adaptive-gap-upper21}.
	\end{align}
	Once Eq.\eqref{eq:adaptive-gap-upper21} is shown, we can combine with Eq.\eqref{eq:adaptive-gap-upper20}
	to obtain
	\begin{align*}
	\sigma^{t}(S) 
	& \le \E_{\Phi^{1}_{V\backslash S}  \sim \cP }
	\left[ \sum_{i \in [t]}\E_{\Phi^{i}_{S}}\left[f\left(S, (\Phi^{i}_{S}, \Phi^{1}_{V\backslash S})\right)\right]\right] \nonumber \\
	& = \sum_{i \in [t]} \E_{\Phi^{1}_{V\backslash S} \sim \cP }
	\left[\E_{\Phi^{i}_{S}}\left[f\left(S, (\Phi^{i}_{S}, \Phi^{1}_{V\backslash S})\right)\right]\right]
	\\
	& = \sum_{i \in [t]} \E_{\Phi^{1} \sim \cP} \left[f(S, \Phi^1) \right] = t\cdot \sigma(S).
	\end{align*}
	Thus the lemma holds. Now we prove Inequality \eqref{eq:adaptive-gap-upper21}.
	To do so, we fix partial realizations $\phi^{1}_{S}, \cdots, \phi^{t}_{S}$. 
	If node $v \in \Gamma(S,\cup_{i \in [t]} \phi^{i}_{S}, \phi^{1}_{V\backslash S}))$, then we conclude that under the realization $(\cup_{i \in [t]} \phi^{i}_{S}, \phi^{1}_{V\backslash S})$, node $v$ can be reached via a path $P$ originated from some node $u \in S$, and only the starting node of $P$
	is in $S$ and all remaining nodes in $P$ are not from $S$. 
	Suppose in path $P$, the edge leaving node $u$ is contained in edge set $\phi^{i}_{u}$ for some $i \in [t]$. Then we conclude that  node $v \in \Gamma(S, (\phi^{i}_{S}, \phi^{1}_{V\backslash S}))$, since the path $P$ exists under the realization $(\phi^{i}_{S}, \phi^{1}_{V\backslash S})$. 
	This shows that $\Gamma(S, (\cup_{i \in [t]}\phi^{i}_{S}, \phi^{1}_{V\backslash S})) \subseteq
	\cup_{i \in [t]} \Gamma(S, (\phi^{i}_{S}, \phi^{1}_{V\backslash S})) $, which
	is sufficient to prove Inequality \eqref{eq:adaptive-gap-upper21}.
\end{proof}

\section{Missing Proof of Section \ref{sec:adaptive-gap-lower-bound},
	Adaptivity Lower Bound}
\label{sec:missing-proof}

{\thmAdaptiveGapLower*}
\begin{proof}
	Consider the following construction for the influence graph: the influence graph $G = (L, R, E, p)$ is a bipartite graph with $|L| = \binom{m^3}{m^2}$ and $|R| = m^3$. All edges $(u, v) \in E$ are directed from the left part $L$ to the right part $R$, associated with probability $1/m$. More specifically, for any subset $X \subseteq R$ with size $m^2$, there is a node $u_{X} \in L$ such that the outgoing edges of $u_X$ are exactly $(u_X, v)$ for every $v \in X$. 
	Thus the out-degree of every vertex in $L$ is $m^2$. 
	
	
	We first describe the main idea of the proof.
	The budget for the IM problem is $m^2$, i.e., we are allowed to select no more than $m^2$ seeds, and we would consider $m$ to be a very large number here. 
	Intuitively, the expected number of nodes in $R$ that is reachable for a single node $u\in L$ is $m^2 \cdot (1/m) = m$, and the influence spread is concentrated on its expected value for large $m$. 
	In an adaptive solution, we could always make the expected marginal gain for the node we select equals the expected influence spread of a single node in $L$, by selecting nodes in $L$ such that none of its out-neighbors has been reached so far, unless there are too few nodes in $R$ that are not reachable.
	Since $m^2 \cdot m = m^3$, the seeds we select would reach almost all but except $o(m^3)$ nodes in $R$, thus the influence spread of the adaptive policy is roughly $m^3$. 
	While for a non-adaptive policy, it can select at most $m^2$ nodes from $L$ and for each node in $R$, on average, it is connected with at most $m^2 \cdot m^2 / m^3= m$ seeds in $L$, we can easily prove that it is indeed the best allocation of seeds in $L$, and the expected probability for nodes in $R$ to be reached is $1 - (1 - 1/m)^m \approx 1 - 1/e$. Moreover, since we are allowed to select no more than $m^2$ seeds in $R$ and they would not reach any other node, the contribution of this part is negligible. Thus the expected influence spread for the optimal non-adaptive solution would not exceed $(1 - 1/e)m^3$ and the adaptivity gap is $e/(e-1)$ on this graph.
	
	The following two claims would make the above intuition formal. 
	\begin{claim}
		\label{claim:myopic-adap-opt-exp}
		For any $\epsilon > 0$, when $m$ is large enough, we have $\OPT_A(G, m^2) \geq (1 - \epsilon)m^3$.
	\end{claim}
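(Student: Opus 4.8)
The plan is to exhibit a single adaptive policy $\pi$ with $\sigma(\pi)\ge(1-\epsilon)m^3$, which suffices since $\OPT_A(G,m^2)\ge\sigma(\pi)$. The policy is the natural ``fresh greedy'': it maintains the set $A\subseteq R$ of currently activated right-vertices, and at each of its $m^2$ steps, if $|R\setminus A|>m^2$ it picks some size-$m^2$ subset $X\subseteq R\setminus A$ for which $u_X\in L$ has not been selected yet (such a choice exists because the number of size-$m^2$ subsets of $R\setminus A$ is $\binom{|R\setminus A|}{m^2}\ge\binom{m^2+1}{m^2}=m^2+1$, strictly more than the at most $m^2-1$ seeds used so far), and otherwise it picks an arbitrary unused vertex of $L$. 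Since $G$ is bipartite with every edge of probability $1/m$, and the out-neighbourhood of $u_X$ is exactly $X\subseteq R\setminus A$, the number of newly activated vertices at such a ``fresh'' step is, conditioned on the history, exactly a $\mathrm{Binomial}(m^2,1/m)$ random variable with mean $m$; I would couple these gains with a sequence $Y_1,Y_2,\dots$ of i.i.d.\ $\mathrm{Binomial}(m^2,1/m)$ variables. Let $\tau$ be the first step at which $|R\setminus A|\le m^2$ (and $\tau=m^2+1$ if this never happens within the budget).

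Next I would bound the final number $N$ of activated right-vertices by a short case analysis. If $\tau\le m^2$, then steps $1,\dots,\tau-1$ activated pairwise-disjoint fresh blocks and left a residual set of size at most $m^2$, so $N\ge m^3-m^2$. If $\tau>m^2$, then all $m^2$ steps were fresh and $N=\sum_{i=1}^{m^2}Y_i$. In both cases $N\ge\min\!\bigl(B,\ m^3-m^2\bigr)$ where $B:=\sum_{i=1}^{m^2}Y_i$ is $\mathrm{Binomial}(m^4,1/m)$, with mean $m^3$ and variance at most $m^3$. Chebyshev's inequality gives $\Pr[B<m^3-m^2]\le m^3/m^4=1/m$, hence
\[
\sigma(\pi)\ \ge\ \E[N]\ \ge\ (m^3-m^2)\,\Pr\!\left[B\ge m^3-m^2\right]\ \ge\ (m^3-m^2)\Bigl(1-\tfrac1m\Bigr)\ \ge\ m^3\Bigl(1-\tfrac2m\Bigr),
\]
using $f(V(\pi,\phi),\phi)\ge N$ (the seeds lie in $L$ and the activated targets in $R$, so they contribute disjointly). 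Taking $m\ge 2/\epsilon$ yields $\OPT_A(G,m^2)\ge(1-\epsilon)m^3$.

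The only genuinely delicate point is the book-keeping at the ``boundary'' near step $\tau$: I must make sure the coupling of per-step gains with i.i.d.\ Binomials is invoked only for steps strictly before $\tau$ (where the selected subset is truly disjoint from $A$, so the $m^2$ out-edges of $u_X$ are fresh independent coins), and that an unused $u_X$ always exists while $|R\setminus A|>m^2$ --- both handled by the crude counting above. Everything else, namely the concentration estimate and the elementary arithmetic, is routine; the real content is the design of the fresh-greedy policy and the clean $\min(B,m^3-m^2)$ lower bound on $N$.
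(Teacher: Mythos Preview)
Your proof is correct and follows essentially the same policy as the paper (select seeds $u_X\in L$ whose neighbourhoods avoid the already-activated right vertices; this is implementable under myopic feedback since all edges leave $L$), but your analysis is organised differently and is in fact cleaner. The paper argues step by step: it runs the policy for $(1-\epsilon/2)m^2$ rounds, defines events $\mathcal{E}_i$ that the $i$-th marginal gain lies in a two-sided window $[(1-\epsilon/2)m+1,(1+\epsilon/2)m+1]$, applies a Chernoff bound to each $\mathcal{E}_i$, and then takes a union bound over all steps. The upper tail is needed there to guarantee the ``fresh'' condition $|R\setminus A|\ge m^2$ persists, and the lower tail to guarantee enough gain; this yields a requirement $m\gtrsim\epsilon^{-2}\log m$. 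Your stopping-time decomposition sidesteps the per-step two-sided control entirely: either the fresh condition fails at some step $\tau\le m^2$, in which case $N\ge m^3-m^2$ deterministically, or all $m^2$ steps are fresh and $N$ equals the i.i.d.\ sum $B$. The single inequality $N\ge\min(B,m^3-m^2)$ then reduces everything to one concentration estimate on $B$, for which Chebyshev already suffices and gives the sharper threshold $m\ge 2/\epsilon$. The only point worth tightening in your write-up is the phrase ``pairwise-disjoint fresh blocks'': the subsets $X_i$ themselves need not be disjoint, but the \emph{newly activated} vertices at each fresh step are disjoint from $A$ by construction, which is all you use.
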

\begin{proof}
	For any fixed $\epsilon > 0$, we would take $m$ such that $m \geq 48/\epsilon^{2}\log m$. 
	Consider the following adaptive policy $\pi$, which only selects nodes from the left part $L$. 
	Moreover, for every node $u \in L$ selected by $\pi$, at the time of selection, none of $u$'s 
	out-neighbors in $R$ has been reached yet from nodes selected by $\pi$ so far (this condition can be verified by an adaptive policy with myopic feedback). 
	When there does not exist such node or the size of the seed set already equals to the budget, $\pi$ would stop.
	For $i \in \{1, \cdots, (1 - \epsilon/2)m^2\}$, let $\Evt_{i}$ denote the event that after selecting the $i$-th seed in $L$, the marginal gain of the influence spread is between $[(1 - \epsilon/2)m + 1, (1 + \epsilon/2)m + 1]$. 
	We would give a lower bound on the conditional probability $\Pr[\Evt_i \mid \Evt_1, \cdots, \Evt_{i - 1}]$. Under the condition $\cup_{j = 1}^{i - 1}\Evt_j$, the current influence spread on the right 
	part $R$ 
	 is less than $(1 + \epsilon/2)m \cdot (1 - \epsilon/2)m^2 = (1 - \epsilon^2/4)m^3 < m^3 - m^2$, thus policy $\pi$ would not stop by now. 
	Thus the marginal gain is the summation of $m^2$ independent binomial variables with mean $m$. By the Chernoff bound we have 
	\begin{align}
	\label{eq:missing-proof-eq1}
	Pr[\Evt_i \mid \Evt_1, \cdots, \Evt_{i - 1}] \geq 1 - exp(-\epsilon^2 m/12) \geq 1 - \frac{1}{m^3}.
	\end{align}
	Consequently,
	\begin{align}
	\label{eq:missing-proof-eq2}
	Pr[\cup_{i = 1}^{t}\Evt_i] = \Pi_{i = 1}^{t}\Pr[\Evt_i \mid \Evt_1, \cdots, \Evt_{i - 1}] \geq (1 - \frac{1}{m^3})^{m^2} \geq 1 - \frac{1}{m^3}\cdot m^2 = 1 - \frac{1}{m}.
	\end{align}	
	Thus the expected influence is greater than $(1 - \frac{1}{m}) \cdot (1-\epsilon/2)m \cdot(1 - \epsilon/2)m^2 \geq (1 - \epsilon)m^3$.
\end{proof}
	
	\begin{claim}
		\label{claim:myopic-nonadap-opt-exp}
		$\OPT_{N}(G, m^2) \leq (1 - (1 - 1/m)^m)m^3 + 2m^2$.
	\end{claim}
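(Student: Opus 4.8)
The plan is to upper bound $\sigma(S)$ for an \emph{arbitrary} non-adaptive seed set $S$ with $|S|\le m^2$, and then take the maximum over $S$. First I would decompose $S = S_L \cup S_R$ with $S_L = S \cap L$ and $S_R = S \cap R$. Since every edge of $G$ is directed from $L$ to $R$, no node of $L\setminus S_L$ is ever activated, the seeds in $S_L\cup S_R$ are activated, and a right node $v\in R\setminus S_R$ is activated exactly when at least one of the (independent, probability-$1/m$) edges from $\{u\in S_L:(u,v)\in E\}$ to $v$ is live. Writing $d_S(v) := |\{u\in S_L:(u,v)\in E\}|$, this gives
\[
\sigma(S) \;=\; |S_L| + |S_R| + \sum_{v\in R\setminus S_R}\Bigl(1-(1-1/m)^{d_S(v)}\Bigr)
\;\le\; m^2 + \sum_{v\in R}\Bigl(1-(1-1/m)^{d_S(v)}\Bigr),
\]
where we used $|S_L|+|S_R| = |S| \le m^2$ and that every summand is nonnegative.

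Next I would bound $\sum_{v\in R} g(d_S(v))$, where $g(x) := 1-(1-1/m)^{x}$ is increasing and concave on $[0,\infty)$. By Jensen's inequality,
\[
\sum_{v\in R} g(d_S(v)) \;\le\; |R|\cdot g\!\left(\frac{1}{|R|}\sum_{v\in R} d_S(v)\right).
\]
Every node of $L$ has out-degree exactly $m^2$, so $\sum_{v\in R} d_S(v) = |S_L|\cdot m^2 \le m^4$; since $|R| = m^3$, the average degree on the right is at most $m$, and because $g$ is increasing this gives $\sum_{v\in R} g(d_S(v)) \le m^3\bigl(1-(1-1/m)^{m}\bigr)$. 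Combining with the previous display, $\sigma(S) \le m^2 + (1-(1-1/m)^m)m^3 \le 2m^2 + (1-(1-1/m)^m)m^3$ for every feasible $S$, and maximizing over $S$ yields the claim.

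The only step with real content is the Jensen estimate: among all ways of connecting $m^2$ left seeds --- each joined to exactly $m^2$ right nodes --- to the $m^3$ right nodes, the balanced configuration (each right node joined to $m$ seeds) maximizes the expected number of activated right nodes, and concavity of $g$ is precisely what makes this rigorous. It is worth noting that a crude substitute such as $1-(1-1/m)^d\le d/m$ (Bernoulli's inequality) would drop the $1-1/e$ factor and be far too weak for the claimed bound; everything else is routine bookkeeping about the bipartite structure, and the extra additive $m^2$ in the statement is slack we do not even need.
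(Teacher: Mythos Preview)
Your proof is correct and follows essentially the same approach as the paper: decompose the seed set into its $L$- and $R$-parts, write the activation probability of each right node as $1-(1-1/m)^{d_S(v)}$, and apply Jensen's inequality via the concavity of $x\mapsto 1-(1-1/m)^x$ together with the degree constraint $\sum_v d_S(v)\le m^4$. Your version is in fact slightly more direct---bounding $\sigma(S)$ for the combined set rather than invoking subadditivity to split $\sigma(S_L\cup S_R)\le\sigma(S_L)+\sigma(S_R)$---and consequently yields the marginally tighter bound $m^2+(1-(1-1/m)^m)m^3$, as you observe.
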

\begin{proof}
	Let $S_L$ (resp. $S_R$) denote the seed set selected by the optimal non-adaptive policy from the left part $L$ (resp. right part $R$). 
	For any node $u_i \in R$ where $i \in [m^3]$, let $x_i$ denote the number of $u_i$'s in-neighbors in the seed set $S_L$. 
	Since the out-degree for each node in $S_L$ is $m^2$, we have $\sum_{i \in [m^3]} x_i \leq |S_L| \cdot m^2$ and the average number of in-neighbors is at most $|S_L| \cdot m^2/m^3 = |S_L|/m$. 
	Furthermore, we can calculate the influence spread of $S_L$, 
	\begin{align}
	\sigma(S_L) &= |S_L| + \sum_{i \in [m^3]} \Pr[u_i \text{ is reachable}]\notag\\
	&=|S_L| + \sum_{i \in [m^3]}\left(1 - \left(1 - \frac{1}{m}\right)^{x_i}\right)\notag\\
	&\leq |S_L| + m^3 \cdot \left(1 - \left(1 - \frac{1}{m}\right)^{|S_L|/m} \right)\notag\\
	&\leq m^2 + m^3\cdot \left(1 - \left(1 - \frac{1}{m}\right)^{m} \right).\label{eq:missing-proof-equation1}
	\end{align} 
	The first inequality holds because function $g(x)=(1 - (1 - \frac{1}{m})^{x})$ is concave. 
	The last inequality holds because $|S_{L}| \leq m^2$.
	Now we have
	\begin{align}
	\OPT_{N}(G, m^2) &= \max\limits_{\substack{S_{L}\subseteq L, S_{R}\subseteq R,\\ |S_L| + |S_R| \leq m^2}}\sigma(S_L \cup S_R) 
	\leq \max\limits_{\substack{S_{L}\subseteq L,\\ |S_L|\leq m^2}}\sigma(S_L) + \max\limits_{\substack{S_{R}\subseteq L,\\ |S_R|\leq m^2}}\sigma(S_R)\notag\\ 
	&\leq m^2 + m^3\cdot \left(1 - \left(1 - \frac{1}{m}\right)^{m}\right) + m^2 = \left(1 - \left(1 - \frac{1}{m}\right)^{m}\right)\cdot m^3 + 2m^2.
	\end{align}
	This concludes the proof.
\end{proof}

	Combining Claims \ref{claim:myopic-adap-opt-exp} and \ref{claim:myopic-nonadap-opt-exp}, we 
	can conclude that for any $\epsilon > 0$, there exists large enough $m$ such that
	$\OPT_A(G, m^2)  / \OPT_{N}(G, m^2) \ge e/(e-1) - \epsilon$.
	Letting $\epsilon \rightarrow 0$, we obtain the theorem.
\end{proof}

\section{Missing Proofs in Section~\ref{sec:constant-competitive-algo}}

For the proofs in this section, 
	let $\G_N(G, k)$ (resp. $\G_A(G,k)$) denote the influence spread for the non-adaptive greedy algorithm
	(resp. adaptive influence spread for the adaptive greedy algorithm), 
	on the influence graph $G$ with a budget $k$.

The proof of Theorem~\ref{theorem:greedy&adaptive-greedy-constant} is complete once
	we prove the following lemma.
\begin{lemma}
	\label{lemma:adaptive-greedy-vs-non-adaptive-optimal}
	Adaptive greedy is $(1 - 1/e)$ approximate to the optimal non-adaptive policy.
\end{lemma}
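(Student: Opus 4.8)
The plan is to reuse the classical greedy recursion of Nemhauser--Wolsey, but to run it \emph{along the prior} $\cP$ rather than conditioned on any fixed partial realization, so that monotonicity of the influence utility function can be applied pointwise on each live-edge graph. Fix an optimal non-adaptive seed set $S^\star$ with $|S^\star|\le k$ and $\sigma(S^\star)=\OPT_N(G,k)$, let $\pi$ denote the adaptive greedy policy, and, following the notation in the proof of Lemma~\ref{lem:marginal}, let $V(\pi,\Phi)_{:i}$ be the first $i$ seeds $\pi$ selects under realization $\Phi$. Put $g_i:=\E_{\Phi\sim\cP}[f(V(\pi,\Phi)_{:i},\Phi)]$, so that $g_0=0$ and $g_k=\G_A(G,k)$; the goal is the recursion $g_{i+1}-g_i\ge\frac1k(\OPT_N(G,k)-g_i)$.

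First I would record a per-step marginal-gain identity. The same telescoping/total-expectation computation used for $\sigma^t(\pi)$ in the proof of Lemma~\ref{lem:marginal} (specialized to $t=1$ and to the first $i$ levels of $\T(\pi)$), together with the fact that conditioned on $\pi$ reaching the level-$i$ node $s$ the realization $\Phi$ is distributed as $\cP$ conditioned on $\Phi\sim\psi_s$ (this uses that $\pi$ is deterministic), gives for every fixed $u$
\[
\E_{s\sim\cP_{i}^{\pi}}\big[\Delta_f(u\mid\psi_s)\big]=\E_{\Phi\sim\cP}\big[f(V(\pi,\Phi)_{:i}\cup\{u\},\Phi)-f(V(\pi,\Phi)_{:i},\Phi)\big].
\]
Taking $u=\pi(\psi_s)$ and averaging yields $g_{i+1}-g_i=\E_{s\sim\cP_{i}^{\pi}}[\Delta_f(\pi(\psi_s)\mid\psi_s)]$.

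Next I would compare the greedy choice with $S^\star$. Since $\pi(\psi_s)$ maximizes $\Delta_f(\cdot\mid\psi_s)$ over $V\setminus\dom(\psi_s)$ and $|S^\star|\le k$, we have $\Delta_f(\pi(\psi_s)\mid\psi_s)\ge\frac1k\sum_{u\in S^\star}\Delta_f(u\mid\psi_s)$ for every level-$i$ node $s$; averaging over $s\sim\cP_{i}^{\pi}$ and applying the displayed identity termwise gives
\[
g_{i+1}-g_i\ \ge\ \frac1k\,\E_{\Phi\sim\cP}\Big[\sum_{u\in S^\star}\big(f(V(\pi,\Phi)_{:i}\cup\{u\},\Phi)-f(V(\pi,\Phi)_{:i},\Phi)\big)\Big].
\]
For each fixed $\phi$, $f(\cdot,\phi)=|\Gamma(\cdot,\phi)|$ is a monotone coverage function, hence submodular, so $\sum_{u\in S^\star}(f(A\cup\{u\},\phi)-f(A,\phi))\ge f(A\cup S^\star,\phi)-f(A,\phi)\ge f(S^\star,\phi)-f(A,\phi)$ for any $A$, the last inequality being monotonicity. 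Plugging $A=V(\pi,\Phi)_{:i}$ into the expectation and using $\E_\Phi[f(S^\star,\Phi)]=\OPT_N(G,k)$ gives $g_{i+1}-g_i\ge\frac1k(\OPT_N(G,k)-g_i)$; the standard induction from $g_0=0$ then yields $g_k\ge\big(1-(1-1/k)^k\big)\OPT_N(G,k)\ge(1-1/e)\OPT_N(G,k)$, which proves the lemma and, combined with Theorem~\ref{theorem:adaptive-gap-upper} and the classical $(1-1/e)$ guarantee for non-adaptive greedy, completes Theorem~\ref{theorem:greedy&adaptive-greedy-constant}.

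The step I expect to be the crux --- and the reason the adaptive-greedy case needs a separate argument --- is this comparison with $S^\star$: it is essential to keep the whole derivation under $\E_{\Phi\sim\cP}$ with $V(\pi,\Phi)_{:i}$ a \emph{random} function of $\Phi$, so that the pointwise bound $f(V(\pi,\Phi)_{:i}\cup S^\star,\Phi)\ge f(S^\star,\Phi)$ is legitimate. The tempting shortcut of first fixing a partial realization $\psi$ and bounding $\E_{\Phi\sim\psi}[f(\dom(\psi)\cup S^\star,\Phi)]$ from below by $\sigma(S^\star)$ is \emph{false}: conditioning on the myopic feedback of $\dom(\psi)$ --- say, learning that all out-edges of $\dom(\psi)$ are blocked --- can drive the conditional influence spread of $S^\star$ far below its prior value, so no such inequality holds conditionally. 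Working along the prior is exactly what sidesteps this, and everything else is the textbook greedy recursion.
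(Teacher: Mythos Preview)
Your proof is correct and is essentially the same argument as the paper's: both define the partial adaptive spread $g_i=\E_{\Phi}[f(V(\pi,\Phi)_{:i},\Phi)]$, use the greedy choice to dominate $\frac1k\sum_{u\in S^\star}\Delta_f(u\mid\psi_s)$, apply pointwise submodularity and monotonicity of $f(\cdot,\phi)$ to reach $\frac1k(\OPT_N-g_i)$, and finish with the standard recursion. The only cosmetic difference is ordering: you invoke the law of total expectation \emph{before} the submodularity/monotonicity step (passing to an unconditional $\E_{\Phi\sim\cP}$ with the random set $V(\pi,\Phi)_{:i}$), whereas the paper keeps the inner expectation conditioned on $\Phi\sim\psi_s$ through those steps and uncondition at the very end; both orderings are valid, and your closing remark about why one must not lower-bound $\E_{\Phi\sim\psi}[f(S^\star,\Phi)]$ by $\sigma(S^\star)$ is a nice clarification of the crux that the paper leaves implicit.
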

\begin{proof}
	For a fixed influence graph $G$, let $S$ ($|S| = k$) denote the seed set selected by the optimal non-adaptive algorithm, where $s_i$ denotes the $i^{th}$ element in set $S$. 
	We use $\cA$ to denote adaptive greedy and for any $t \in \{0, 1, \cdots, k\}$, we use $U(t)$ to denote the expected adaptive influence spread of nodes selected by $\cA$ in the first $i$ rounds, i.e., 
	\begin{align}
	\label{eq:adaptive-greedy-eq1}
	U(t) := \E_{\Phi \sim \cP}\left[f\left(V(\cA, \Phi)_{:t}, \Phi \right)\right],
	\end{align}
	From the above definition, we can see that $U(0) = 0$ and $U(k) = \sigma(\cA)$. By Lemma \ref{lem:marginal}, we have
	\begin{align}
	\label{eq:adaptive-greedy-marginal}
	U(t) = \sum_{i = 0}^{t-1}\E_{s \sim \cP_{i}^{\cA}}\left[\Delta_{f}\left(\cA(\psi_s) \mid \psi_s\right)\right].
	\end{align}
	Now, for any $t \in \{0,1, \cdots, k - 1\}$
	\begin{align}
	U(t + 1) - U(t) &= \E_{s \sim \cP_{t}^{\cA}}\left[\Delta_f\left(\cA(\psi_s) \mid \psi_s\right)\right] \notag\\
	&\geq \frac{1}{k} \sum_{i = 1}^{k}\E_{s \sim \cP_{t}^{\cA}}\left[\Delta_f\left(s_i \mid \psi_s\right)\right]\notag \\
	&= \frac{1}{k}\sum_{i=1}^{k} \E_{s \sim \cP_{t}^{\cA}}\left[\E_{\Phi\sim \cP}\left[f\left(\dom(\psi_s) \cup \{s_i\}, \Phi\right) - f\left(\dom(\psi_s), \Phi\right) | \Phi \sim \psi_s\right]\right]\notag\\
	&= \frac{1}{k} \E_{s \sim \cP_{t}^{\cA}}\left[\E_{\Phi\sim \cP}\left[\sum_{i=1}^{k}\left(f\left(\dom(\psi_s) \cup \{s_i\}, \Phi\right) - f\left(\dom(\psi_s), \Phi\right)\right) | \Phi \sim \psi_s\right]\right]\notag\\
	&\geq \frac{1}{k}\E_{s \sim \cP_{t}^{\cA}}\left[\E_{\Phi \sim \cP}\left[f(\dom(\psi_s) \cup S, \Phi) - f(\dom(\psi_s), \Phi) | \Phi \sim \psi_s\right]\right]\notag\\
	&\geq \frac{1}{k}\E_{s \sim \cP_{t}^{\cA}}\left[\E_{\Phi\sim\cP}\left[f(S, \Phi) - f(\dom(\psi_s), \Phi) | \Phi \sim \psi_s\right]\right]\notag\\
	&=\frac{1}{k}\left(\fa(S) - U(t)\right).  \label{eq:adaptive-greedy-margin}
	\end{align} 
	The first inequality holds since adaptive greedy $\cA$ chooses the node that maximizes the expected marginal gain, i.e., for any partial realization $\psi$, $\Delta_f(\cA(\psi) \mid \psi ) \geq \Delta_f(s_i \mid \psi )$ for any $i \in [k]$.
	The second inequality is because the influence utility function $f(\cdot, \Phi)$ is submodular 
	under a fixed realization $\Phi$. 
	The third inequality holds because the influence utility function $f(\cdot, \Phi)$ is monotone 
	under a fixed realization $\Phi$. 
	The last equality utilizes the law of total expectation.
	
	Now via standard argument, Eq.~\eqref{eq:adaptive-greedy-margin} implies that
	\begin{align}
	\G_{A}(G, k) &= U(k) \geq \left(1 - \left(1 - \frac{1}{k}\right)^{k}\right)\sigma(S) = \left(1 - \left(1 - \frac{1}{k}\right)^{k}\right)\OPT_N(G, k)\notag\\
	&\geq \left(1 - \frac{1}{e} \right) \cdot \OPT_N(G, k).
	\label{eq:adaptive-greedy-eq2}
	\end{align}
	This concludes the proof.
\end{proof}

%
%

We now prove Theorem~\ref{theorem:bad-example-greedy}.
We first present a example showing that the non-adaptive greedy achieves at most
	$\frac{e^2 + 1}{(e + 1)^2}$ approximation ratio.

\begin{lemma}
	\label{lem:greedyBadExample}
	Non-adaptive greedy algorithm has ratio at most $\frac{e^2 + 1}{(e + 1)^2}$ with respect to the
	optimal adaptive solution, in the IC model with myopic feedback.
\end{lemma}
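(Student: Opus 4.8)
The plan is to construct, for a large integer parameter $m$, an explicit influence graph $G_m$ together with a seed budget $k=k(m)$ on which the non-adaptive greedy algorithm satisfies $\G_N(G_m,k)\le\big(\frac{e^2+1}{(e+1)^2}+o(1)\big)\,\OPT_A(G_m,k)$, and then to let $m\to\infty$. The instance is assembled from two pieces that are forced to share the budget $k$. The first piece is an \emph{adaptively exploitable} bipartite gadget in the spirit of the construction behind Theorem~\ref{theorem:adaptive-gap-lower}: for every $m^2$-element subset of a right layer $R$ of size $\Theta(m^3)$ there is a left node whose out-edges, each with probability $1/m$, lead exactly to that subset. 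On this gadget a fully adaptive policy can repeatedly select left nodes all of whose right-neighbours are still unreached, so it activates a $(1-o(1))$-fraction of $R$, whereas any non-adaptive --- in particular greedy --- schedule incurs the usual $1-1/e$ coverage loss on $R$. The second piece is a \emph{decoy} gadget whose nodes are calibrated so that they carry the largest single-seed influence spread in the whole graph, forcing the (deterministic) greedy algorithm to begin inside the decoy, but whose marginal gains decay --- through overlap and myopic-feedback effects --- in a way that, compounded with greedy's subsequent forced choices in the bipartite gadget, drags greedy's total strictly below a $(1-1/e)$-fraction of the adaptive optimum.

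The key steps are the following. (i) Describe $G_m$ and $k$ precisely and record the single-seed spread of every node type, so that greedy's first selection lies in the decoy. (ii) Use the fact that non-adaptive greedy is a \emph{deterministic} procedure --- its selection rule $\Delta_f(u\mid S)=\sigma(S\cup\{u\})-\sigma(S)$ is a non-random quantity --- to trace the exact sequence of nodes it picks: a first phase in which it stays in the decoy, and a second phase in which it crosses over and picks least-covered subsets of $R$ in the bipartite gadget; then evaluate $\G_N(G_m,k)$ directly as a sum of per-vertex activation probabilities. (iii) Lower-bound $\OPT_A(G_m,k)$ by exhibiting the adaptive policy that ignores the decoy and drains the bipartite gadget, with the concentration estimate carried out by Chernoff bounds exactly as in the proof of Theorem~\ref{theorem:adaptive-gap-lower}. (iv) Take the ratio of the bounds from (ii) and (iii) and optimize the free parameters of $G_m$ so that it converges to $\frac{e^2+1}{(e+1)^2}$.

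The hard part is the joint design-and-calibration of the decoy in steps (i) and (iv): the decoy must be attractive enough to capture greedy's early budget yet wasteful enough that greedy's overall performance falls \emph{below} $1-1/e$, and a careless two-piece construction --- in which greedy is merely indifferent between the decoy and the gadget --- only reproduces the bound $1-1/e$. Driving the constant all the way down to $\frac{e^2+1}{(e+1)^2}$ requires tuning the decoy's fan-out, its edge probabilities, and its coupling to the bipartite gadget so that greedy is provably locked onto the intended suboptimal trajectory at every step while the adaptive policy still recovers essentially the full value. A secondary, routine point is to check that greedy can keep the left-subsets it selects in the bipartite gadget well spread out, which uses the abundance of available left nodes (all $m^2$-subsets of $R$ are present); the final verification of the exact constant is then an elementary optimization of a closed-form expression.
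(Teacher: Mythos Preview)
Your proposal correctly identifies the two necessary ingredients --- an adaptively exploitable component and a decoy that traps greedy --- but it is a plan, not a proof. The decoy, which you yourself flag as the hard part, is never actually constructed, and the claim that the target ratio $\frac{e^2+1}{(e+1)^2}$ drops out of an ``elementary optimization of a closed-form expression'' is unsupported. Without a concrete decoy there is no way to verify greedy's trajectory step by step, and hence no way to verify the constant; you have deferred precisely the step that carries all the content of the lemma.

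The paper's construction is quite different from, and considerably simpler than, what you sketch. It does \emph{not} use the exponential-size bipartite gadget from Theorem~\ref{theorem:adaptive-gap-lower}. Instead it builds a polynomial-size three-layer graph $V_1\to V_2\to V_3$ with $|V_1|=d-1$, $|V_2|=d$, $|V_3|=2d$ (nodes in $V_3$ carrying weight $w$). Each $V_2$-node points to two private $V_3$-nodes with probability $e/(e+1)$, so an adaptive policy that seeds all of $V_2$, observes via myopic feedback which $V_3$-nodes were missed, and patches them with the remaining budget activates essentially all of $V_3$. The decoy $V_1$ is the complete bipartite layer into $V_2$ with edge probability $1/d$: each $V_1$-seed activates one $V_2$-node in expectation, so its marginal exceeds that of a direct $V_2$-seed by exactly $+1$, which forces greedy to exhaust $V_1$ first. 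After that, each $V_2$-node is already active with probability $p_{d-1}\approx 1-1/e$; the edge probability $e/(e+1)$ is exactly the value making greedy (barely) prefer $V_2$ over $V_3$, and the budget $k=\frac{e+3}{e+1}d$ is precisely what the adaptive policy needs to cover $V_3$. The ratio $\frac{e^2+1}{(e+1)^2}$ then comes out of a direct computation, not an optimization over free parameters.

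The idea you are missing is that the decoy is not a separate piece bolted onto the adaptive gadget: it is a layer \emph{in front of} the adaptive layer $V_2$, so that greedy's seeds in $V_1$ still produce partial coverage of $V_2$ (and hence of $V_3$), but only the diluted $(1-1/e)$ fraction of it, while consuming almost $d$ units of budget that the adaptive policy would have spent directly in $V_2$. A disconnected decoy of the kind your description suggests would have to be simultaneously more attractive than a gadget seed and yet yield greedy strictly less; it is far from clear such a configuration can be tuned to hit the specific constant $\frac{e^2+1}{(e+1)^2}$, and you give no evidence that it can.
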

\begin{proof}
	Consider the following influence graph $G(V, E, p)$, where $V = V_1 \bigcup V_2 \bigcup V_3$, 
	{$|V_1| = d -1$}, $|V_2| = d$ and $|V_3| = 2d$. 
	We would use $v^{i}_{j}$ to denote the $j^{th}$ node in $V_{i}$. Nodes in $V_1$ and $V_2$ have unit weight while nodes in $V_3$ have weight $w$. 
	Note that we could achieve the weight of $w$ by simply replacing each node
		with a chain of $w$ nodes with edge probability $1$, so that as long as the head of the
		chain is activated, the whole chain is activated. 
	There are directed edges from $V_1$ to $V_2$ and from $V_2$ to $V_3$. 
	More specifically, for any $j \in [d], l \in [d-1]$, there is a direct edge from the node $v^{1}_{l}$ to the node $v^{2}_{j}$, associated with probability $1/d$. 
	The node $v^{2}_{j}$ is connected to node $v^{3}_{2j - 1}$ and $v^{3}_{2j}$, with probability $e/(e + 1)$. 
	The budget $k = \frac{e + 3}{e + 1}d$. We first consider the optimal adaptive solution and we observe that the optimal adaptive strategy can reach almost all nodes in $V_3$.
	\begin{claim}
		\label{thm:adapt-opt-perform}
		For any $\epsilon > 0$, if we set $d \geq 2\log (2/\epsilon)/\epsilon^2$, then we have $\OPT_{A}(G, k) \geq (1 - \epsilon)\cdot 2dw$
	\end{claim}
	\begin{proof}
		Consider the following adaptive strategy: we first select all $d$ nodes in $V_2$ and observe which nodes in $V_3$ have not yet been reached, this can be done with myopic feedback. 
		We would then use the left budget to select nodes in $V_3$ that have not been reached. 
		Let $X_j = \I\{v^{3}_{j} \mbox{ not activated by seed nodes in $V_2$}\}$ for $j \in [2d]$,
			where $\I\{\}$ is the indicator function. 
		$X_j$'s are independent Bernoulli random variables with $\E[X_{j}] = \frac{1}{e + 1}$.
		Then by the Chernoff bound,
		\begin{align}
		\label{eq:chernoff-adaptive-opt}
		\Pr[X_1 + \cdots + X_{2d} > \frac{2}{e + 1}d + \epsilon d ]\leq e^{-\frac{\epsilon d \cdot \epsilon(e+1)/2}{3}} \leq  e^{-d\epsilon^2/2} \leq \frac{\varepsilon}{2}. 
		\end{align}
		Consequently, the expected number of nodes in $V_3$ that have not been activated 
			by seeds in $V_2$ is at most
		 $\frac{\epsilon}{2}\cdot 2d + (1-\frac{\epsilon}{2})\cdot (\frac{2}{e + 1}d + \epsilon d) 
		 \leq \frac{2}{e + 1}d + 2\epsilon d$.
		But the adaptive greedy algorithm still has a budget of $\frac{2}{e + 1}d$ to directly
		activate nodes in $V_3$, and thus the expected final number of non-activated nodes in
		$V_3$ is at most $2\epsilon d$.
		Thus we conclude the proof.
	\end{proof}
	
	Next, we consider the greedy algorithm and have the following conclusion.
	\begin{claim}
		\label{thm:greedy-perfrom}
		The non-adaptive greedy algorithm would first select all {$d - 1$} nodes in $V_1$, and then select $\frac{2}{e + 1}d + 1$ nodes in $V_2$. Consequently, we have that
		\begin{align}
		\label{bad-example-greedy-eq1}
		\G_N(G, k) = \left(d - 1\right)+ \left[\left(\frac{2}{e + 1}d + 1\right) + \left(1 - \left(1 - \frac{1}{d}\right)^{d - 1}\right)\cdot\left(\frac{e - 1}{e + 1}d -1 \right) \right]\cdot(1 + \frac{2e}{e + 1}w),
		\end{align}
		when $d, w \rightarrow \infty$, we know that $\frac{\G_N(G, k)}{dw}  \rightarrow \frac{2e^2 + 2}{(e + 1)^2}$.
	\end{claim}
	
	\begin{proof}
		We first prove that greedy would first select all $d - 1$ nodes in $V_1$.
		Consider that the greedy algorithm has already selected $j$ nodes in $V_1$ as seeds,
			with $j=0,1,\ldots, d - 1$.
		Let $p_j$ denote the probability that a node in $V_2$ is activated in this case. 
		We know that $p_j = 1 - (1 - \frac{1}{d})^{j}$.  
		At this point, we know that the marginal gain for selecting the $(j+1)$-th node in $V_1$ is
		\begin{align}
		\label{bad-example-greedy-eq2}
		M_1 = 1 + d\cdot \frac{1}{d} (1 - p_j)\cdot (1 + \frac{2e}{e + 1}w) = 1 + (1 - p_j)(1 + \frac{2e}{e + 1}w).
		\end{align}
		In contrast, the marginal gain for selecting the first node in $V_2$ as a seed is 
		\begin{align}
		\label{bad-example-greedy-eq3}
		M_2 = (1 - p_j)(1 + \frac{2e}{e + 1}w),
		\end{align}
		and the marginal gain for selecting the first node in $V_3$ as a seed is
		\begin{align}
		\label{bad-example-greedy-eq4}
		M_3 = p_j(1 - \frac{e}{e + 1})w + (1 - p_j)w = \left(p_j \cdot \frac{1}{e + 1} + (1 - p_j)\right)w.
		\end{align}
		Therefore $M_1 > M_2$.
		Comparing $M_1$ with $M_3$, we use the fact that for all $j < d$, $p_j \le 1 - 1/e$, 
		and thus
%
		\begin{align}
		M_1 - M_3 &=1 + (1 - p_j)(1 + \frac{2e}{e + 1}w) - \left(p_j \cdot \frac{1}{e + 1} + (1 - p_j)\right)w\notag\\
		&> (1 - p_j)\frac{2e}{e + 1}w -  \left(p_j \cdot \frac{1}{e + 1} + (1 - p_j)\right)w\notag\\
		&=\left(\frac{e - 1}{e + 1} - \frac{e}{e + 1}p_j\right)w\notag\\
		&\geq 0.	\label{bad-example-greedy-eq5}
		\end{align}
		Thus we conclude that greedy would select all $(d - 1)$ nodes in $V_1$ first. 
		Afterwards, we compare the marginal gain of selecting a node in $V_2$ versus selecting
		a node in $V_3$.
		Notice that if we select a node in $V_3$, we would definitely not select a node whose
			in-neighbor in $V_2$ is already selected as a seed, because it only decreases the marginal.
		Therefore, the marginal gains of selecting a node in $V_2$ or a node in $V_3$ are still
			given us $M_2$ and $M_3$.
		Thus, the difference of marginal gain is
		\begin{align}
		M_2 - M_3 &= (1 - p_{d-1})(1 + \frac{2e}{e + 1}w) - \left(p_{d-1} \cdot \frac{1}{e + 1} + (1 - p_{d - 1})\right)w\notag\\
		&> (1 - p_{d-1})\frac{2e}{e + 1}w - \left(p_{d-1} \cdot \frac{1}{e + 1} + (1 - p_{d - 1})\right)w\notag\\
		&=\left(\frac{e - 1}{e + 1} - \frac{e}{e + 1}p_{d-1}\right)w\notag\\
		&=\left(\frac{e - 1}{e + 1} - \frac{e}{e + 1}\left(1 - \left(1 -\frac{1}{d}\right)^{d-1}\right)\right)w\notag\\
		&\geq 0.	\label{bad-example-greedy-eq6}
		\end{align} 
		Thus the marginal gain for selecting nodes in $V_2$ is greater than nodes in $V_3$ and greedy would select $\frac{2}{e + 1}d + 1$ nodes in $V_2$. 	All in all, the expected utility for greedy is

		
		\begin{align}
		\G_N(G, k) = \left(d - 1\right)+ \left[\left(\frac{2}{e + 1}d + 1\right) + \left(1 - \left(1 - \frac{1}{d}\right)^{d - 1}\right)\cdot\left(\frac{e - 1}{e + 1}d -1 \right) \right]\cdot(1 + \frac{2e}{e + 1}w).\label{bad-example-greedy-eq7}
		\end{align}
		and when $d, w \rightarrow \infty$, we know that 
			$\frac{\G_N(G, k)}{dw} \rightarrow \frac{2e^2 + 2}{(e + 1)^2}$.
	\end{proof}
	
	Combining Claim \ref{thm:greedy-perfrom} and Claim \ref{thm:adapt-opt-perform}, we conclude that when $d, w \rightarrow \infty$, 
	\begin{align}
	\label{bad-example-greedy-eq8}
	\frac{\G_{N}(G, k)}{\OPT_{A}(G)} \rightarrow \frac{e^2 + 1}{(e + 1)^2} \approx 0.606.
	\end{align}
\end{proof}

We then assert that the approximation ratio of adaptive greedy is no better than greedy.
\begin{lemma}
	\label{claim:greedy-vs-adaptive-greedy}
	The approximation ratio for the non-adaptive greedy algorithm is no worse than the adaptive greedy algorithm, over all graphs.
\end{lemma}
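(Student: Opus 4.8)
The plan is to prove this by a reduction from worst cases of non-adaptive greedy to worst cases of adaptive greedy. Concretely, I would show: \emph{for every influence graph $G'$ with budget $k'$ and every $\epsilon>0$, there is an influence graph $G$ with budget $k$ such that $\G_A(G,k)/\OPT_A(G,k)\le \G_N(G',k')/\OPT_A(G',k')+\epsilon$.} Given this, $\inf_{G,k}\G_A(G,k)/\OPT_A(G,k)\le \G_N(G',k')/\OPT_A(G',k')$ for every $(G',k')$, and taking the infimum over $(G',k')$ shows that the worst-case ratio of adaptive greedy is at most that of non-adaptive greedy, which is exactly the lemma (the approximation ratio of a selection rule being the infimum of its performance ratio over all instances).

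The graph $G$ is obtained from $G'$ by a gadget with two competing requirements. First, $G$ should be \textbf{feedback-neutral for greedy}: every node that adaptive greedy would pick on $G$ has only outgoing edges of probability $1$, so the partial realization observed by adaptive greedy never branches and adaptive greedy runs \emph{identically} to non-adaptive greedy on $G$. Using the heavy-node chain gadget already employed in the proof of Lemma~\ref{lem:greedyBadExample} (replacing a unit node by a length-$w$ chain of probability-$1$ edges), I would calibrate the weights so that, after dividing by $w$ and letting $w\to\infty$, the trajectory and spread of non-adaptive greedy on $G$ converge to those of non-adaptive greedy on $G'$; hence $\G_A(G,k)=(1-o_w(1))\,w\,\G_N(G',k')$. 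Second, $G$ should remain \textbf{feedback-rich for a far-sighted policy}: alongside each greedy-visible heavy chain for a node $v\in V'$ I would keep a shadow node carrying the original random edges $(v,x)$ of $G'$ with their probabilities, attached to the chains by probability-$1$ edges. A far-sighted adaptive policy can select these shadow nodes to recover exactly the diffusion information the optimal adaptive policy $\pi^{\star}$ of $G'$ relies on; simulating $\pi^{\star}$ on $G$ (pairing each selection of $\pi^{\star}$ with the corresponding shadow selection, and enlarging $k$ accordingly) gives $\OPT_A(G,k)\ge (1-o_w(1))\,w\,\OPT_A(G',k')$. Combining, $\G_A(G,k)/\OPT_A(G,k)\le (1+o_w(1))\,\G_N(G',k')/\OPT_A(G',k')$, and letting $w\to\infty$ completes the reduction.

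I expect the crux to be reconciling feedback-neutrality for greedy with feedback-richness for the optimal policy. Making the feedback seen by greedy useless tends to make it useless for every policy, and the shadow-node gadget that restores it to a far-sighted policy typically forces that policy to spend a separate seed to \emph{harvest} the heavy influence of $v$ and another to \emph{probe} $v$'s random edges, which inflates the budget $k$ relative to $k'$; one then has to argue the extra budget does not let greedy escape the bad trajectory, and that $\OPT_A(G,k)$ does not drop below $w\cdot\OPT_A(G',k')$. Verifying that the heavy-chain marginal gains preserve the $\argmax$ ordering of non-adaptive greedy on $G'$ (so that greedy on $G$ really does replay $G'$-greedy), and controlling $\OPT_A(G,k)$ under this budget bookkeeping, are the two technical steps that require care; once the gadget and the budget are pinned down, the remaining limiting computations are routine.
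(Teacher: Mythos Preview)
Your reduction strategy is exactly the paper's: start from a worst case $(G',k')$ for non-adaptive greedy and build an instance $(G,k)$ on which adaptive greedy does no better. But your gadget is more complicated than necessary, and the complication creates the very gap you flag. Once you split each vertex into a ``heavy chain'' and a ``shadow probe'', a policy that wants both the chain's mass and the shadow's feedback must spend two seeds per simulated step of $\pi^\star$, so you enlarge $k$ to $2k'$. But greedy with budget $2k'$ will happily spend all $2k'$ picks on chain heads (they always dominate shadows in marginal gain), which makes $\G_A(G,2k')\approx w\cdot\G_N(G',2k')$, not $w\cdot\G_N(G',k')$. The ratio you recover is then $\G_N(G',2k')/\OPT_A(G',k')$, which need not be close to $\G_N(G',k')/\OPT_A(G',k')$; the budget bookkeeping you worried about does not close.

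The paper avoids all of this with a one-line gadget. Take $G_2$ to be $G'$ with every node given weight $w$ (via your chain trick), and add a layer $G_1$ consisting of one isolated copy of each vertex, with a single probability-$1$ edge from the $G_1$-copy to the corresponding $G_2$-node. Keep the same budget $k=k'$. Adaptive greedy always prefers the $G_1$-copy over the matching $G_2$-node (its marginal is larger by exactly $1$), so it seeds entirely inside $G_1$; the only myopic feedback it sees is the deterministic activation of the $G_2$-node, which is informationless, so adaptive greedy replays non-adaptive greedy on $G'$ and $\G_A(G,k)=w\cdot\G_N(G',k')+k'$. Meanwhile the optimal adaptive policy can ignore $G_1$ entirely and run on $G_2$---which is just a scaled $G'$ with all its random edges and feedback intact---giving $\OPT_A(G,k)\ge w\cdot\OPT_A(G',k')$. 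The crucial point you missed is that you do not need to separate the influence-carrying part from the feedback-carrying part: leave $G'$ whole for the optimal policy, and simply put a one-step prob-$1$ ``blindfold'' in front of it that greedy is lured into by the $+1$ marginal.
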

\begin{proof}
	Fix an influence graph $G(V, E, p)$, and any $k \in [n]$.
	We use $c$ to denote the approximation ratio of greedy, i.e., 
	\[
	c = \frac{\G_N(G, k)}{\OPT_A(G, k)}.
	\]
	We construct a family of graph $G(w)$ such that the approximation ratio for adaptive greedy is approaching to $c$ when $w \rightarrow \infty$.
	The influence graph $G(w)$ consists of two parts, $G_1$ and $G_2$. 
	The graph $G_1$ has same nodes as $G$, but it does not contain any edges, while the graph $G_2$ is exactly the same as $G$, except that the weight for each node is multiplied by a factor of $w$. 
	Notice that we can always assign integral weights $w$ to a node by connecting it to a directed chain of length $w - 1$. 
	For any node $v\in G_1$, $v$ has exactly one outgoing edge, connecting to the corresponding node in $G_2$, the edge will be live with probability 1.
	
	Now, consider adaptive greedy on $G(w)$ with the same budget.
	Our first observation is that adaptive greedy will never choose nodes from $G_2$. 
	This is because if the corresponding node in $G_1$ has not been chosen, the marginal gain of choosing the node in $G_1$ is always larger by 1, and if it has already been chosen, the marginal gain to choose the node in $G_2$ is 0. 
	Consequently, the adaptive greedy algorithm would always choose nodes in $G_1$.
	However, because myopic feedback only provides one step feedback after seed selection, 
	selecting a node in $G_1$ would only provide the activation of its corresponding node in $G_2$
		as the feedback, but this is already known for sure, and thus we do not get any
		useful feedback under myopic feedback model on this graph.
	Therefore, the adaptive greedy algorithm in this case behaves exactly the same as the 
		non-adaptive greedy algorithm on the influence graph $G$,
		 and the performance for adaptive greedy is
	\begin{align}
	\label{eq:adapt-greedy-perform}
	\G_{A}(G(w), k) = w \cdot \G_{N}(G, k) + k \leq (w + 1) \cdot \G_{N}(G, k) .
	\end{align}
	Consider the optimal adaptive policy, a feasible adaptive policy is to ignore nodes in graph $G_1$ and perform the optimal adaptive policy on graph $G_2$, we have
	\begin{align}
	\label{eq:adapt-opt-perform}
	\OPT_A(G(w), k) \geq \OPT_{A}(G_2(w), k) = w\cdot \OPT_{A}(G, k).
	\end{align}
	By Eq.~\eqref{eq:adapt-greedy-perform} and Eq.~\eqref{eq:adapt-opt-perform}, the approximation ratio of adaptive greedy can be bounded as
	\begin{align}
	\label{eq:adapt-greedy-approx}
	\frac{\G_{A}(G(t), k)}{\OPT_A(G(t), k)} \leq \frac{(w + 1) \cdot \G_{N}(G, k)) }{w\cdot \OPT_A(G, k)} =  \frac{w + 1}{w}\cdot c \rightarrow c, \mbox{ when } w\rightarrow \infty.
	\end{align}
	This concludes the proof.
\end{proof}

\end{document}